\newcommand{\lrho}{\rho_{AB}^{(\lambda)}}
\newcommand{\PP}{\mathcal{P}}
\newcommand{\oPP}{{\overline{\mathcal{P}}}}
\newcommand{\kbra}[1]{| #1\rangle\!\langle #1|}
\newcommand{\PAi}{\{ P_{A,i} \}_i}
\newcommand{\bra}[1]{\langle #1|}
\newcommand{\ket}[1]{|#1\rangle}
\newcommand{\ketbra}[2]{| #1\rangle\!\langle #2|}
\newcommand{\tr}[1]{\mbox{Tr}\left[ #1\right]}
\newcommand{\trA}[1]{\mbox{Tr}_A\left[ #1\right]}
\newcommand{\trB}[1]{\mbox{Tr}_B\left[ #1\right]}
\newtheorem{theorem}{Theorem}
\newtheorem{definition}{Definition}
\begin{document}

\title{Equivalence between non-Markovian dynamics and correlation backflows}
\author{Dario De Santis$^1$ and Markus Johansson$^1$}
\affiliation{$^1$ICFO-Institut de Ciencies Fotoniques, The Barcelona Institute of Science and Technology,
08860 Castelldefels (Barcelona), Spain}
\date{\today}

\begin{abstract}

The information encoded into an open quantum system that evolves under a Markovian dynamics is always monotonically non-increasing.
Nonetheless, for a given quantifier of the information contained in the system, it is in general not clear if for all non-Markovian dynamics it is possible to observe a non-monotonic evolution of this quantity, namely a backflow.
We address this problem by considering correlations of finite-dimensional bipartite systems.
For this purpose, we consider a class of correlation measures and prove that if the dynamics is non-Markovian there exists at least one element from this class that provides a correlation backflow.
Moreover, we provide a set of initial probe states that accomplish this witnessing task.
This result provides the first one-to-one relation between non-Markovian dynamics of finite-dimensional quantum systems and correlation backflows.

\end{abstract}

\maketitle

The study of open quantum systems dynamics \cite{BPbook,RHbook} is of central interest in quantum mechanics. A quantum system is called open when interaction with the environment that surrounds the quantum system is included in the description of its evolution. Since there are no experimental scenarios where a quantum system can be considered completely isolated, this approach provides a more realistic description of quantum evolutions. 

The interaction between an open quantum system $S$ and its environment $E$ leads to two possible regimes of evolution. The phenomena associated with the \textit{Markovian} regime are characterized by the monotonic non-increase of the information contained in the open system. In this case we have a unidirectional flow of information away from $S$ and we say that the dynamics is memoryless. Instead, in the \textit{non-Markovian} regime, this flow is not unidirectional and part of the information lost is recovered in one or more subsequent time intervals. This phenomenon is called \textit{backflow} of information.
However, it is nonobvious what mathematical framework is better suited to reproduce this phenomenology. Recently, a framework based on a notion of \emph{divisibility} of dynamical maps, namely the operators describing the dynamical evolution of the system, achieved a promising consensus \cite{RHbook,WWTA,INI,RHP,BLP,BD,bogna,LFS,BognaREV}. More precisely, it requires that, if the dynamics is Markovian, the evolution between any two times is represented by a completely positive and trace-preserving (CPTP) linear map.

Many efforts are directed towards testing this mathematical definition by studying the characteristic backflows of information that different physical quantities show when the evolution is non-Markovian. Once we consider a quantity that is non-increasing under Markovian evolutions, we can study its ``non-Markovian witnessing potential'', namely the ability to show a backflow when the dynamics is non-Markovian. 
Distinguishability between states \cite{BLP,BD,bogna}, correlation measures \cite{LFS,PRA,long,Janek}, channel capacities \cite{BognaChannel0} and the volume of accessible states \cite{Volume} are some examples of quantities that have been studied in this scenario. Moreover, while Markovian phenomena are reproduced correctly by definition, the non-trivial point that has to be analyzed is if it is possible to obtain one-to-one connections between backflows of these quantities and non-Markovian dynamical maps. Indeed, this result  would imply a correspondence between the phenomenological and the mathematical description of non-Markovianity that we have presented.

In this work we focus on the witnessing potential of the set of correlation measures. In particular, we study the connection between revivals of bipartite correlations and when the evolution of one subsystem $S$ is non-Markovian. 
 Several measures have already been considered in this scenario, e.g. quantum mutual information \cite{LFS,long} and entanglement measures \cite{Janek}. Recently, a correlation measure  that witnesses almost all non-Markovian dynamics has been introduced \cite{PRA}. However, it is unknown if any of these correlation measures can witness all non-Markovian dynamics \cite{long}.

The main result of this work is the first proof of a one-to-one relation between correlation backflows and non-Markovian dynamics.
We consider a class of correlation measures for bipartite systems that provides backflows if and only if the dynamics is not Markovian. For this purpose, we make use of supplementary ancillary systems to define initial probe states that allow to succeed in this witnessing task. Finally, we introduce a measure of non-Markovianity.

{\textit{Non-Markovianity and divisibility properties.---}}Given a generic finite-dimensional Hilbert space $\mathcal{H}$, we define $\mathcal{B}(\mathcal{H})$ to be the set of linear bounded operators that act on $\mathcal{H}$ and $\mathcal{S}(\mathcal{H})$ the set of positive semidefinite, Hermitian and trace one operators on $\mathcal{H}$, namely the state space of $\mathcal{H}$.

We consider an open quantum system $S$ described by states on a finite-dimensional Hilbert space $\mathcal{H}_S$. At the initial time $t_0$ the system $S$ is uncorrelated with the surrounding environment $E$. The evolution of $S$ from $t_0$ to $t\geq t_0$ is given by a dynamical map: a CPTP linear operator $\Lambda_S(t,t_0) : \mathcal{S}(\mathcal{H}_S)\rightarrow \mathcal{S}(\mathcal{H}_S)$. Therefore, the complete evolution of $S$, namely from $t_0$ to any time $t\geq t_0$, is described by a family of dynamical maps  $\{ \Lambda_S(t,t_0)\}_t$, where $\Lambda_S(t,t_0)$ is CPTP for every $t\geq t_0$.

The concept needed to define the mathematical structure we adopt to define Markovianity  is the completely positive (CP) divisibility of the family $\{\Lambda_S(t,t_0)\}_t$  in terms of intermediate maps $V_S(t,t')$.

\begin{definition}
The evolution $\{\Lambda_S(t,t_0)\}_t$ is called CP-divisible if, for any $t\geq  t_0$, the dynamical map $\Lambda_S(t ,t_0)$ can
be decomposed as a sequence of CPTP linear maps $\Lambda_S(t,t_0)=V_S(t,t') \, \Lambda_S(t',t_0)$, where $V_S(t,t')$ is a CPTP linear map for any $t_0 \leq t'\leq t$. \end{definition}
CP-divisibility is commonly used to define Markovian dynamics and it is the definition that  we consider in this work: $\{\Lambda_S(t,t_0)\}_t$  is Markovian if and only if it is CP-divisible. Likewise, we call an evolution non-Markovian if and only if for some $t_0 \leq t'\leq t$ there is no CPTP intermediate map $V_S(t,t')$.

{\textit{Measurements with fixed output probability distributions.---}}Any measurement process on a quantum state $\rho \in \mathcal S(\mathcal{H})$ is defined by a \textit{positive-operator valued measure} (POVM), namely an indexed set of Hermitian and positive semi-definite operators $\{P_i\}_{i=1}^n$ of $ \mathcal{B}(\mathcal{H})$ such that $\sum_{i=1}^n P_i = \mathbbm{1}$, where $\mathbbm{1} \in \mathcal{B}(\mathcal{H})$ is the identity operator on $\mathcal{H}$ and $n$ is the number of possible  measurement outcomes.   The operator $P_i$ represents the $i$-th output of the measurement, where $p_i=\tr{\rho P_i}$ is the corresponding occurrence probability.

Let $\mathcal{E}=\{p_i,\rho_i\}_{i=1}^n$ be a generic ensemble of $n$ states where each finite-dimensional state $\rho_i\in \mathcal{S}(\mathcal{H})$ occurs with probability $p_i$.
Now we consider a bipartite state  $\rho_{AB} \in \mathcal{S}(\mathcal{H}_A \otimes \mathcal{H}_B)$ and a POVM $\{P_{A,i}\}_{i=1}^n$ defined for the subsystem $A$. We define $\mathcal{E}(\rho_{AB}, \{P_{A,i}\}_{i=1}^n)\equiv \{ p_i , \rho_{B,i}\}_{i=1}^n$ to be the ensemble of states of $B$ that we obtain when we apply on $A$ the measurement $\{P_{A,i}\}_{i=1}^n$, where  
\begin{equation}\label{ens}
p_i= \tr{ \rho_{AB}  P_{A,i} \otimes \mathbbm{1}_B} , \,\,\, \rho_{B,i} = {\trA{ \rho_{AB}  P_{A,i} \otimes \mathbbm{1}_B}}/{ p_i }\, .
\end{equation}
We call  $\{p_i\}_{i=1}^n$ and $\{ \rho_{B,i}\}_{i=1}^n$ respectively the output probability distribution and the output states of the measurement. We call their combination $\mathcal{E}(\rho_{AB} , \{P_{A,i}\}_{i=1}^n)$ the output ensemble.

We consider finite probability distributions $\PP=\{ p_i\}_{i=1}^n$ composed by $n$ positive elements, where $\sum_{i=1}^n p_i=1$. We define the set of $n$-output POVMs that, if applied on $\rho\in \mathcal{S}(\mathcal{H})$, provide $\PP$-distributed outcomes.

\begin{definition}\label{probdistribution}
Given the finite probability distribution $\mathcal{P}=\{p_i\}_{i=1}^n$, the $n$-output POVM $\{P_{i}\}_{i=1}^n$ on $\mathcal{H}$ is a $\PP$-POVM  for  $\rho\in \mathcal{S}(\mathcal{H})$ if and only if it belongs to 
$$
\Pi^\PP(\rho)\equiv \{\{ P_i\}_{i=1}^n : \, \tr{\rho \, P_i} = p_i, \forall i=1,...\, ,n \} \, .
$$
\end{definition}

Similarly, given a bipartite system state $\rho_{AB}$, we define the measurement processes that, if applied on one side of $\rho_{AB}$, provide $\PP$-distributed output ensembles (see Fig. \ref{scenariofig}).
\begin{definition}\label{probdistribution}
Given the finite probability distribution $\mathcal{P}=\{p_i\}_{i=1}^n$, the $n$-output POVM $\{P_{A,i}\}_{i=1}^n$ on $\mathcal{H}_A$   is a $\PP$-POVM on $A$ for $\rho_{AB}\in \mathcal{S}(\mathcal{H}_{AB})$ if and only if it belongs to 
$$
\Pi^\PP_A(\rho_{AB})\equiv \{\{ P_{A,i}\}_{i=1}^n : \, \tr{\rho_{AB} \, P_{A,i} \otimes \mathbbm{1}_B} = p_i , \forall i=1,\dots ,n \} \, .
$$
\end{definition} 
Analogously, we can define $\Pi^\PP_B(\rho_{AB})$. 
We notice that for any given $\PP$ and  $\rho_{AB}$, we have 
$\Pi^\PP_A(\rho_{AB})= \Pi^\PP(\rho_A) \, ,
$ where $\rho_A=\trB{\rho_{AB}}$. Moreover, $\Pi^\PP(\rho)$ ($\Pi_A^\PP(\rho_{AB})$) is a non-empty convex set for any $\rho$ ($\rho_{AB}$) and $\PP$.

\begin{figure}
\includegraphics[width=0.34\textwidth]{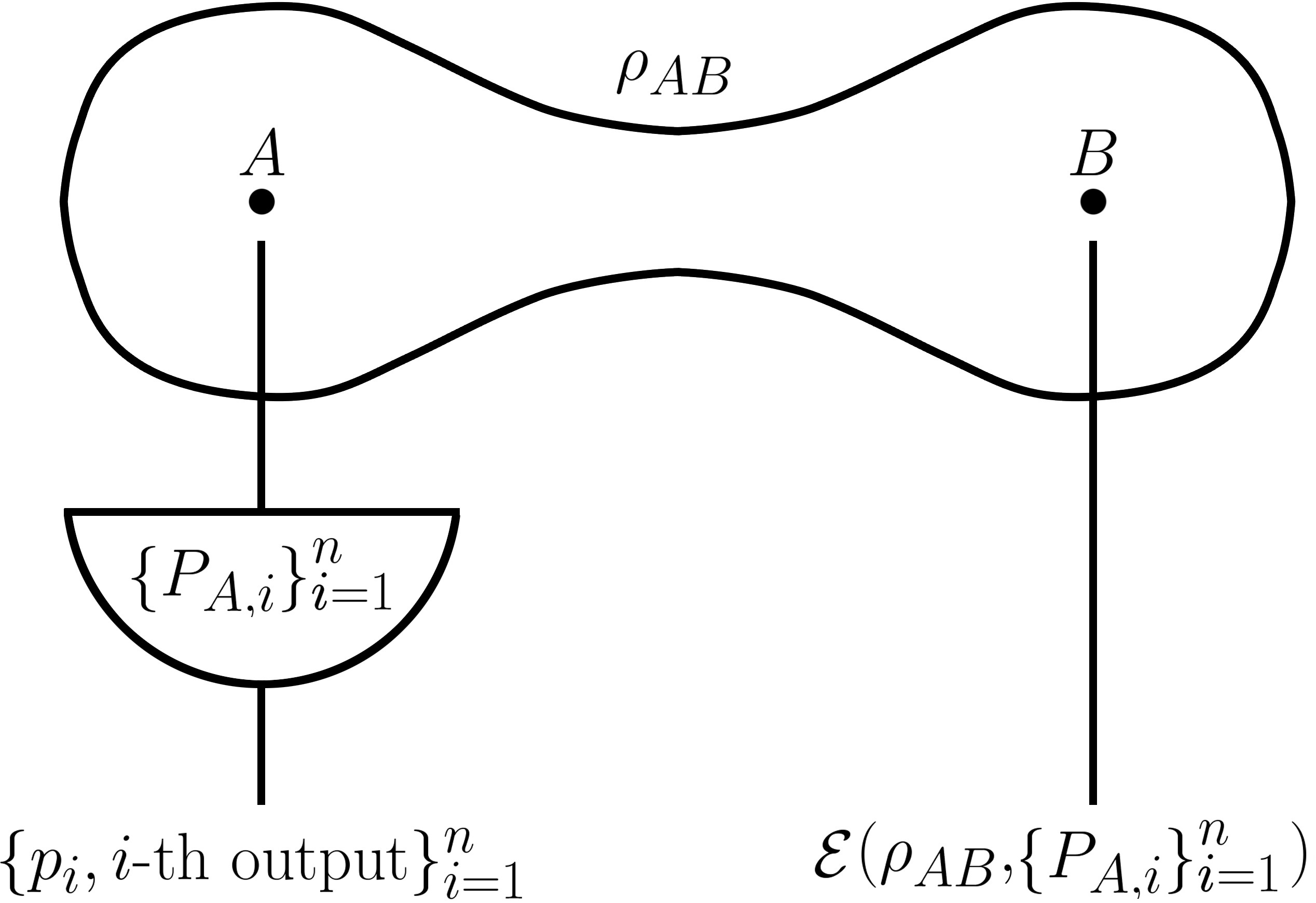}
\caption{Given a probability distribution $\PP=\{p_i\}_{i=1}^n$, $\{P_{A,i}\}_{i=1}^{ n}$ is a $\PP$-POVM for $\rho_{AB}$ if and only if the output probability distribution of this measurement is $\PP$. The correlation $C_A^\PP(\rho_{AB})$ considers the scenario where $\rho_{AB}$ is measured with a $\PP$-POVM $\{P_{A,i}\}_{i=1}^{ n}$ on $A$  that provides the largest guessing probability of the corresponding output ensemble $\mathcal{E}(\rho_{AB},\{P_{A,i}\}_{i=1}^{ n})=\{p_i,\rho_{B,i}\}_{i=1}^{ n}$ of states of $B$.}\label{scenariofig}
\end{figure}

{\textit{Witnessing non-Markovianity with  distinguishability of ensembles.---}}  We apply an $n$-output measurement $\{P_i \}_{i=1}^n$ on a state that we randomly extract from an ensemble $\mathcal{E}=\{p_i, \rho_i\}_{i=1}^{n}$ of states of $\mathcal{S}(\mathcal{H})$. The \textit{guessing probability} $P_g(\mathcal{E})$ is the average probability to successfully identify the extracted state with an optimal measurement. This quantity is defined as
\begin{equation}\label{Pgens}
P_g(\mathcal{E}) \equiv \max_{\{P_i \}_{i=1}^n} \sum_{i=1}^n p_i \, \tr{\rho_i \, P_i} \, ,
\end{equation}
where the maximization is performed over the $n$-output  POVMs of $\mathcal{B}(\mathcal{H})$.

Now we describe how guessing probability can be used to witness non-Markovianity.
We consider a finite-dimensional system $\mathcal{H}_S\otimes\mathcal{H}_{A'}$, where the open quantum system $S$ is evolved by a generic {$\{\Lambda_S(t,t_0)\}_t$} and ${A'}$ is an ancillary system. Given an initial ensemble $\mathcal{E}_{SA'}(t_0)=\{p_i, \rho_{SA',i}\}_i$, we consider its evolution:
\begin{equation}\label{ensevo}
\mathcal{E}_{SA'}(t_0) \, \longrightarrow \, \mathcal{E}_{SA'}(t)= \{ p_i , \Lambda_S(t,t_0) \otimes {I}_{A'}( \rho_{SA',i} )\}_i \,  ,
\end{equation}
where $I_{A'}:\mathcal{S}(\mathcal{H}_{A'}) \rightarrow \mathcal{S}(\mathcal{H}_{A'})$ is the identity map on $\mathcal{S}(\mathcal{H}_{A'})$. For any CPTP map $\Lambda$ acting on the states of $\mathcal{E}=\{p_i,\rho_i\}_i$, $P_g(\mathcal{E})$ is non-increasing: $P_g(\{p_i,\rho_i\}_i)\geq P_g(\{p_i,\Lambda(\rho_i)\}_i)$. Therefore,  if $\{\Lambda_S(t,t_0)\}_t$  is CP-divisible,
\begin{equation}
 P_g (\mathcal{E}_{SA'} (\tau+\Delta \tau)) - P_g (\mathcal{E}_{SA'} (\tau))  \leq 0 \, ,
\end{equation}
for every $\tau\geq t_0$ and $\Delta \tau\geq 0$.

 Given any {evolution $\{\Lambda_S(t,t_0)\}_t$} and time interval $[\tau, \tau+\Delta \tau]$, there exist an ancillary system $A'$ and an initial ensemble $\overline{\mathcal{E}}_{S A'} (t_0)$ of separable states of $\mathcal{S}(\mathcal{H}_S\otimes\mathcal{H}_{A'})$
\begin{equation}\label{ensBD}
\overline{\mathcal{E}}_{S A'} (t_0) \equiv \{\overline{p}_i, \overline{\rho}_{SA',i} \}_{i=1}^{\overline{n}}  \, ,
\end{equation}
such that we have a backflow 
\begin{equation}\label{PgBD}
 P_g(\overline{\mathcal{E}}_{S A'}(\tau + \Delta \tau)) -P_g(\overline{\mathcal{E}}_{SA'}(\tau))  > 0 \, ,
\end{equation}
if and only if there is no CPTP intermediate map $V_S(\tau+\Delta\tau,\tau)$, as shown in Ref. \cite{BD}. Moreover, $\oPP \equiv \{\overline{p}_i\}_{i=1}^{ \overline{n}}$ is finite and $\mbox{dim}(\mathcal{H}_{A'})\leq d_S \equiv \mbox{dim}(\mathcal{H}_S)$. We underline that, even if  we do not make it explicit, $\overline{\mathcal{E}}_{S A'} (t_0)$ strictly depends on {$\{\Lambda_S(t,t_0)\}_t$} and $[\tau,\tau+\Delta \tau]$.
The result of Ref. \cite{BD} is general and applies to any evolution defined on a finite-dimensional system.

{\textit{A class of correlation measures.---}}
Let $\PP \equiv \{p_i\}_{i}$ be a generic finite probability distribution and $\rho_{AB} \in  \mathcal{S}(\mathcal{H}_A\otimes \mathcal{H}_B)$ a generic finite-dimensional bipartite system state. We consider the correlation measure
\begin{equation}\label{measure2}
C_A^\PP(\rho_{AB}) \equiv \max_{\{P_{A,i} \}_{i} \in \Pi^\PP_A (\rho_{AB}) } P_g  \left(\, \rho_{AB} ,  \{P_{A,i}\}_{i} \right) - p_{max}\, ,
\end{equation}
where the maximization is performed over the $\PP$-POVMs on $A$ for $\rho_{AB}$ and we used the definitions $P_g(\rho_{AB}, \{P_{A,i}\}_{i} )\equiv P_g(\mathcal{E}(\rho_{AB} ,  \{P_{A,i}\}_{i}))$ and $p_{max}\equiv \max_i p_i$ (see Fig \ref{scenariofig}). Therefore, we can consider a class of correlation measures where each element is defined by a different distribution $\PP$.

The operational meaning of this correlation measure for a given $\PP$ is the following. Its value (modulo $p_{max}$) is the largest guessing probability of the ensembles $\{p_i,\rho_{B,i}\}_i$  on $B$ that $A$ can generate measuring its side of $\rho_{AB}$ with $\PP$-POVMs. Therefore, $C_A^\PP(\rho^{(1)}_{AB}) > C_A^\PP(\rho^{(2)}_{AB}) $ implies that  the largest distinguishability of the $\PP$-distributed output ensembles of $B$ that $A$ can generate measuring $\rho^{(1)}_{AB}$ is greater than the largest distinguishability of the $\PP$-distributed output ensembles of $B$ that $A$ can generate measuring $\rho^{(2)}_{AB}$.

To consider $C^\PP_A$  a proper correlation measure, we have to show that it is: zero-valued for product states, non-negative and monotonically decreasing under local operations \cite{long}. In order to prove the first property, given a generic product state $\rho_{AB}=\rho_A\otimes \rho_B$, the output ensemble $\mathcal{E}(\rho_A\otimes \rho_B , \{P_{A,i}\}_i ) = \{p_i, \rho_B\}_i$ is made of identical states  for any POVM $\{P_{A,i}\}_i$ and $P_g (\{p_i, \rho_B\}_i)=p_{max}$.  Therefore, while  $C_A^\PP(\rho_{AB})\geq 0$ is now trivial, the proof for the monotonicity of $C_A^\PP(\rho_{AB})$ under local operations is in the Supplemental Material (SM).

Similarly, we can define the class of measures of the form
\begin{equation}\label{measureB}
C_B^\PP(\rho_{AB}) \equiv \max_{\{P_{B,i} \}_i \in \Pi^\PP_B (\rho_{AB}) } P_g  \left(\, \rho_{AB} ,  \{P_{B,i}\}_{i=1}^n \right) - p_{max}\, .
\end{equation}
Since in general $C_A^\PP(\rho_{AB})\neq C_B^\PP(\rho_{AB})$, we can consider the symmetric class of measures
\begin{equation}\label{measuresymm}
C_{AB}^\PP(\rho_{AB}) \equiv \max\left\{ C_A^\PP(\rho_{AB}),C_B^\PP(\rho_{AB}) \right\} \, .
\end{equation}
Finally, we notice that the correlation measures given in Eqs. (\ref{measure2}), (\ref{measureB}) and (\ref{measuresymm})
can be considered as generalizations for generic distributions $\PP$ of the correlation measures introduced in Ref. \cite{PRA}, where only uniform distributions are considered.

{\textit{The probe states.---}}The goal of this work is to prove a one-to-one correspondence between non-Markovianity and correlation backflows. Therefore, similarly to Ref. \cite{BD}, we consider the most general scenario where a family of dynamical maps $\Lambda_S(t,t_0)$ defines the evolution for $t\geq t_0$ and we focus on a generic time interval $[\tau, \tau+\Delta \tau]$. We provide an \textit{initial probe state} and a distribution $\PP$ for which the correlation measure $C^\PP_A$ shows a backflow in the time interval $[\tau,\tau+\Delta \tau]$ if and only if there is no CPTP intermediate map $V_S(\tau+\Delta \tau, \tau)$.

First, we introduce the bipartition and the state space needed to consider $C^\PP_A$ and the initial probe state. 
We define the bipartite system $\mathcal{S}(\mathcal{H}_A \otimes~ \mathcal{H}_B)$ such that  dim$(\mathcal{H}_A)=~\overline{n} \, $
 and $\mathcal{H}_B \equiv   \mathcal{H}_S \otimes \mathcal{H}_{A'} \otimes\mathcal{H}_{A''}$, where $\mbox{dim}(\mathcal{H}_{S})=\mbox{dim}(\mathcal{H}_{A'})=
d_S$ and  dim$(\mathcal{H}_{A''})=\overline{n}+1  $.  
 We fix the following orthonormal basis for $\mathcal{H}_{A}$ and  $\mathcal{H}_{A''}$: $\mathcal{M}_{A} \equiv \{\ket{i}_{A} \}_{i=1}^{\overline{n}} = \{\ket{1}_A, \ket{2}_A, ...\, ,\ket{\overline{n}}_A\}$  and 
$\mathcal{M}_{A''}\equiv \{\ket{i}_{A''} \}_{i=1}^{\overline{n} +1 } = \{\ket{1}_{A''}, \ket{2}_{A''}, ...\, ,\ket{\overline{n}+1}_{A''}\}$. Notice that the ancillas $A'$ and $A''$ can be considered as a single ancilla with Hilbert space $\mathcal{H}_{A'}\otimes \mathcal{H}_{A''}$ (see Fig. \ref{fig1}).

\begin{figure}
\includegraphics[width=0.3\textwidth]{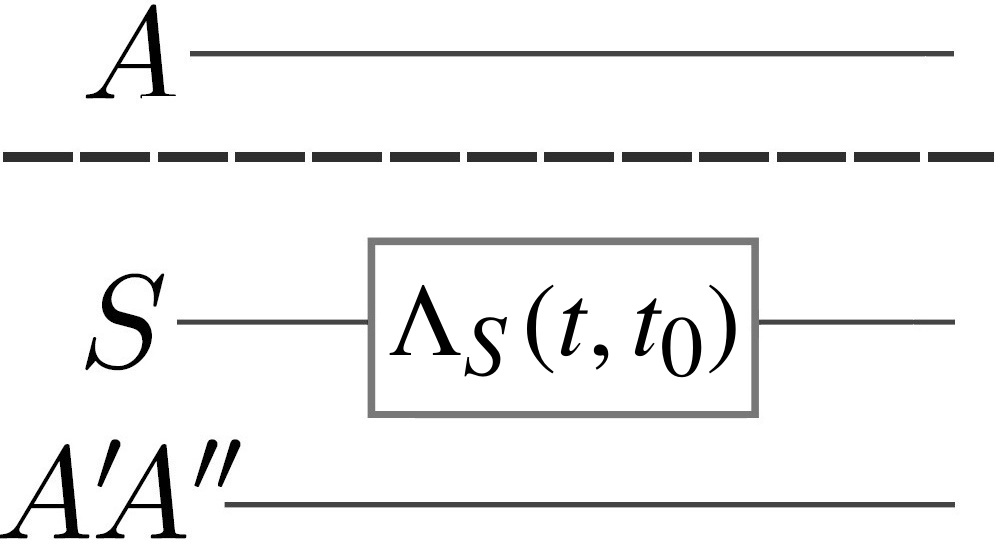}
\caption{The initial probe state $\rho_{AB}^{(\lambda)}(t_0)$ belongs to the bipartite system $ \mathcal{S}(\mathcal{H}_A\otimes \mathcal{H}_B)$,  where $\mathcal{H}_B=\mathcal{H}_S\otimes\mathcal{H}_{A'} \otimes\mathcal{H}_{A''}$. We consider the correlation $C_A^\oPP(\rho_{AB}^{(\lambda)}(t))$ given by the bipartition between the subsystems $A$ and $B$, where the open quantum system  $S$ undergoes the evolution defined by $\Lambda_S(t,t_0)$.}\label{fig1}
\end{figure}

We define $\overline{\rho}_{B,i}  \equiv    \overline{\rho}_{SA',i}\otimes \ketbra{\overline{n}+1}{\overline{n}+1}_{A''} \in \mathcal{S}(\mathcal{H}_B)$, for $i=1,\dots,\overline n$, where we made use of the elements of  $\overline{\mathcal{E}}_{S A'} (t_0) = \{\overline{p}_i, \overline{\rho}_{SA',i} \}_{i=1}^{\overline{n}}$  (see Eq. (\ref{ensBD})). We introduce a  class of initial probe states $\rho_{AB}^{(\lambda)} (t_0) \in \mathcal{S}(\mathcal{H}_A \otimes \mathcal{H}_B)$ parametrized by  $\lambda \in [0,1)$
\begin{equation}\label{probebi}
\rho_{AB}^{(\lambda)} (t_0) \equiv \sum_{i=1}^{\overline{n}} \overline{p}_i \, \ketbra{i}{i}_A \otimes  \left(\lambda\, \sigma_{SA'}  \otimes \ketbra{i}{i}_{A''} + (1-\lambda)\,  \overline{\rho}_{B,i}  \right) ,
\end{equation}
where $\sigma_{SA'}$ is a generic state of $\mathcal{S}(\mathcal{H}_S \otimes \mathcal{H}_{A'})$. Notice that in Eq. (\ref{probebi}) the index $i$  runs from $1$ to $\overline{n}$.  Since the ancillary systems do not evolve, the action of the dynamical map of the evolution on the probe state, i.e., $I_{ A}\otimes \Lambda_S(t,t_0)\otimes I_{ A'A''} (\rho_{AB}^{(\lambda)} (t_0)) $,  preserves the initial classical-quantum separable structure for any $t\geq t_0$
\begin{equation}\label{probebit}
\rho_{AB}^{(\lambda)} (t) = \sum_{i=1}^{\overline{n}} \overline{p}_i \, \ketbra{i}{i}_A \otimes \left( \lambda\, \sigma_{SA'}(t)\otimes \ketbra{i}{i}_{A''} + (1-\lambda) \, \overline{\rho}_{B,i} (t) \right) ,
\end{equation}
where $\overline{\rho}_{B,i} (t) =\Lambda_S(t,t_0)\otimes I_{ A' A''} \,( \overline{\rho}_{B,i}  )$ and $\sigma_{SA'}(t)=\Lambda_S(t,t_0)\otimes I_{ A' } (\sigma_{SA'})$. Finally, since $\mbox{Tr}_B [ \lrho (t) ]=\sum_{i=1}^{\overline{n}} \overline p_i \ketbra{i}{i}_A$, the set $\Pi^\PP_A ( \lrho (t) ) = \Pi^\PP (\mbox{Tr}_B [ \lrho (t) ] )$ does not depend on $t$ and $\lambda$. 

{\textit{Witnessing non-Markovianity with correlations.---}}We provide a procedure that witnesses any non-Markovian dynamics with a correlation backflow. 
In the case of bijective or pointwise non-bijective $\Lambda_S(t,t_0)$,
 this scenario has been studied in  Refs. \cite{PRA,Janek}. Moreover, the  negativity entanglement measure witnesses any non-Markovian qubit evolution \cite{Janek}.

In order to witness non-Markovianity through backflows of $C_A^\oPP$, the evolution of the initial state $\rho_{ASA'}=\sum_{i=1}^{\overline n} \overline p_i \ketbra{i}{i}_A \otimes \overline\rho_{SA',i} $ is an intuitive choice. Indeed, $\{\ketbra{i}{i}_{A} \}_{i=1}^{\overline n} \in \Pi_A^\oPP(\rho_{ASA'}(t))$ for all $t\geq t_0$ and $P_g(\rho_{ASA'}(t) ,\{\ketbra{i}{i}_{A} \}_{i=1}^{\overline n} )= P_g(\overline{\mathcal E}_{SA'}(t))$  (see Eq. (\ref{PgBD})). Nonetheless, in general $\{\ketbra{i}{i}_{A}\}_{i=1}^{\overline n}$ is not selected by the maximization that defines $C^\oPP_A(\rho_{ASA'}(t))$ \cite{long}.

We present the main result of this work, namely that the class of correlation measures $C^\oPP_A$ is able to witness any non-Markovian dynamics. 
\begin{theorem}\label{theorem}
For any {evolution $\{\Lambda_S(t,t_0)\}_t$} defined on a finite-dimensional system $S$ and time interval $[\tau, \tau+\Delta \tau]$ there exist at least one ancillary system $\mathcal{H}$, one bipartite system $\mathcal{H}_A\otimes \mathcal{H}_B$, where  $\mathcal{H}_B=\mathcal{H}_S \otimes \mathcal{H}$, a correlation measure for bipartite systems $\mathcal{C}_{AB}$ and an initial state  $\rho_{AB}(t_0)\in \mathcal{S} (\mathcal{H}_A\otimes \mathcal{H}_B)$ such that a backflow
$$
 \mathcal{C}_{AB} \left(  \rho_{AB} ( \tau+ \Delta \tau ) \right) - \mathcal C_{AB} \left(  \rho_{AB} ( \tau ) \right) > 0 \, ,
$$
occurs if and only if there is no CPTP intermediate map $V_S(\tau+\Delta \tau,\tau)$, where $S$ is the only system that evolves during the evolution.
\end{theorem}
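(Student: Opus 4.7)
The plan is to use the probe state $\rho_{AB}^{(\lambda)}(t_0)$ of Eq.~(10) with $\oPP$ the BD distribution and $\lambda$ close enough to $1$, and to set $\mathcal{C}_{AB} = C^\oPP_A$. The ``only if'' direction is immediate: existence of a CPTP intermediate map $V_S(\tau+\Delta\tau,\tau)$ makes $\rho_{AB}^{(\lambda)}(\tau+\Delta\tau)$ the image of $\rho_{AB}^{(\lambda)}(\tau)$ under the local CPTP map $I_A \otimes V_S(\tau+\Delta\tau,\tau) \otimes I_{A'A''}$, so the monotonicity of $C^\oPP_A$ under local operations (Appendix) rules out any backflow.

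For the ``if'' direction I would establish the closed-form identity
\[
C^\oPP_A\!\left(\rho_{AB}^{(\lambda)}(t)\right) \;=\; \lambda + (1-\lambda)\,P_g\!\left(\overline{\mathcal{E}}_{SA'}(t)\right) - \overline{p}_{\max}
\]
for $\lambda$ sufficiently close to $1$. The lower bound is direct: the basis measurement $\{\ketbra{i}{i}_A\}_{i=1}^{\overline n}$ is a $\oPP$-POVM on $A$ (since $\rho_A^{(\lambda)}=\sum_i \overline p_i \ketbra{i}{i}_A$), and the induced ensemble on $B$ has its two summands supported on orthogonal $A''$ subspaces. Measuring $A''$ first therefore identifies the label $i$ deterministically with probability $\lambda$ and reduces to discriminating $\overline{\mathcal{E}}_{SA'}(t)$ with probability $1-\lambda$. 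Granted the identity, the backflow of $C^\oPP_A$ in $[\tau,\tau+\Delta\tau]$ equals $(1-\lambda)$ times the BD backflow of $P_g(\overline{\mathcal{E}}_{SA'})$, strictly positive iff the evolution admits no CPTP intermediate map, by the result of Ref.~\cite{BD}.

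The matching upper bound is the technical core. Since $\rho^{(\lambda)}_{AB}(t)$ is diagonal in $\{\ket{i}_A\}$, only the diagonal entries of any $\oPP$-POVM $\{M_{A,i}\}$ contribute to the output ensemble, and the problem reduces to maximising over a matrix $a_{ij}$ in the transportation polytope fixed by $\oPP$. A block decomposition in $A''$ rewrites the induced $P_g$ as $\lambda F_1(a) + (1-\lambda) F_2(a,t)$, with $F_1(a)=\sum_j \max_i \overline p_i a_{ij} \le 1$ and $F_2(a,t) = P_g\!\left(\{\overline p_i,\, \sum_j a_{ij}\overline\rho_{SA',j}(t)\}\right)$. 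Both are convex in $a$, so the maximum sits at an extreme point; extreme points with $F_1(a)=1$ are ``partition-like'' (each column of $a$ has a single nonzero entry), for which a data-processing inequality applied to classical coarse-graining of the label of a classical-quantum state yields $F_2(a,t)\le P_g(\overline{\mathcal{E}}_{SA'}(t))$, with equality reached at $a=\mathrm{id}$. Extreme points with $F_1(a)<1$ become strictly suboptimal once $\lambda$ is close enough to $1$, by continuity at $\lambda=1$. I expect the main obstacle to be this last uniform continuity step -- producing a single threshold $\lambda^\star<1$ valid for all non-partition-like extreme points of the polytope -- together with checking that the data-processing argument on the classical register tightly controls $F_2$.
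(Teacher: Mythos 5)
Your proposal is correct and, while it adopts the paper's setup verbatim (the probe state $\lrho(t_0)$, the measure $C_A^{\oPP}$, the lower bound from the basis measurement $\{\ketbra{i}{i}_A\}_i$ via the orthogonal $A''$ blocks, and monotonicity under $I_A\otimes V_S\otimes I_{A'A''}$ for the converse direction), it handles the technical core --- controlling the maximization over $\oPP$-POVMs --- by a genuinely different route. The paper never computes $C_A^{\oPP}(\lrho(\tau))$ in closed form: it lower-bounds $\Delta C_A^{\oPP}$ by evaluating the state at $\tau+\Delta\tau$ on the basis POVM and the state at $\tau$ on an optimal POVM $\{P_{A,i}^{(\lambda)}\}_i$, then splits into case (A) (basis POVM optimal for some $\lambda$, hence for all larger $\lambda$) and case (B) (never optimal), resolving (B) through the limits $P_g(\mathcal{E}^{\perp(\lambda)})\to 1$ and $P_g(\mathcal{E}^{\parallel(\lambda)})\to P_g(\overline{\mathcal{E}}_{SA'}(\tau))$ as $\lambda\to 1$ together with continuity against the gap $\overline\delta$ of Ref.~\cite{BD}. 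You instead prove the exact identity $C_A^{\oPP}(\lrho(t))=\lambda+(1-\lambda)P_g(\overline{\mathcal{E}}_{SA'}(t))-\overline{p}_{max}$ above an explicit threshold, which subsumes both cases (and in fact shows that the paper's case (B) cannot occur for $\lambda$ large enough). The two obstacles you flag dissolve on inspection: (i) the feasible diagonals $a_{ik}=\bra{i}_AP_{A,k}\ket{i}_A$ form a polytope with finitely many extreme points, so $c^\ast\equiv\min\{1-F_1(a)\}>0$ over the non-partition-like ones, and since $F_2\le 1$ while $P_g(\overline{\mathcal{E}}_{SA'}(t))\ge\overline{p}_{max}$, the single threshold $\lambda^\star=(1-\overline{p}_{max})/(c^\ast+1-\overline{p}_{max})<1$ works uniformly in $t$; (ii) the constraint $\sum_i\overline{p}_i a_{ik}=\overline{p}_k$ with all $\overline{p}_k>0$ forces any partition-like $a$ to be a permutation matching equal probabilities, so $F_2(a,t)=P_g(\overline{\mathcal{E}}_{SA'}(t))$ exactly and no data-processing step is even needed. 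What each approach buys: the paper's limiting argument is lighter, avoiding the reduction to POVM diagonals and Bauer's maximum principle for the convex objective $\lambda F_1+(1-\lambda)F_2$; yours yields a closed form for the witness and an explicit, $t$-uniform $\lambda^\star$, hence a sharper quantitative statement of the backflow $(1-\lambda)\overline\delta$. (Minor: with the convention above your $F_1$ should read $\sum_i\overline{p}_i\max_k a_{ik}$; the row and column roles of $a$ are swapped in your formula.)
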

\begin{proof}
We consider the ancillary system $\mathcal{H}=\mathcal{H}_{A'}\otimes\mathcal{H}_{A''}$,  the correlation measure $\mathcal{C}_{AB}=C^{\oPP}_{A}$  and the set of  initial probe states $\rho_{AB}^{(\lambda)}(t_0)$.
We prove that, for wisely chosen values of $\lambda$, we have a backflow
\begin{equation}\label{backflowC}
 \Delta {C}_{A}^\oPP \equiv C_{A}^{\oPP} \left( \lrho (\tau+\Delta \tau)\right) - C_{A}^{\oPP} \left(\lrho (\tau) \right) > 0 \, ,
 \end{equation}
if and only if there is no CPTP  intermediate map  $V_S(\tau + \Delta \tau, \tau)$. 

We notice that  $\{ \ketbra{i}{i}_A\}_{i=1}^{\overline{n}}  \in \Pi_A^\oPP( \lrho (t) )$  is a $\oPP$-POVM on $A$ for the probe state. Moreover, as noticed above, $\Pi^\oPP_A( \lrho (t) )$ does not depend on $\lambda$ and $t$. In the following, if not specified otherwise, the index $i$  runs from 1 to $\overline n$.  The output ensemble  that we obtain measuring  $\rho_{AB}^{(\lambda)}(t)$ with $\{ \ketbra{i}{i}_A\}_{i}$ is
\begin{equation}\label{EnsPi}
\mathcal{E} \left(\lrho(t) , \{ \ketbra{i}{i}_A\}_i \right)= \left\{\overline{p}_i \, , \, \lambda \sigma_{SA'}(t) \otimes \ketbra{i}{i}_{A''} + (1-\lambda) \overline{\rho}_{B,i} (t) \right\}_i   .
\end{equation}
The corresponding guessing probability is (See SM)
\begin{equation}\label{PgApix}
P_g \left(\lrho(t) , \{ \ketbra{i}{i}_A\}_i  \right) = \lambda + (1-\lambda) \, P_g\left(\overline{\mathcal{E}}_{SA'}(t) \right)  .
\end{equation}

Now we consider $\{P_{A,i}\}_i \in \Pi_A^\oPP( \lrho (t) )$  different from $\{ \ketbra{i}{i}_A\}_i$. In general, we obtain (See SM):
\begin{equation}\label{EnsJx}
\mathcal{E} \left(\lrho (t), \{P_{A,i} \}_i \right)= \left\{ \overline{p}_i, \lambda \sigma_{B,i}^{\perp }(t) + (1-\lambda) \sigma_{B,i}^{\parallel } (t) \right\}_i  .
\end{equation}
Each state $\sigma_{B,i}^{\perp }(t)$ is defined as $\sigma_{B,i}^{\perp }(t)\equiv \sigma_{SA'} (t) \otimes \rho_{A'',i}^{\perp}$, where  $\rho_{A'',i}^{\perp}$ is a  convex combination of the states $\{ \ketbra{k}{k}_{A''} \}_{k=1}^{\overline n } $. Analogously,  $\sigma_{B,i}^{\parallel }(t) \equiv \rho_{SA',i}^{\parallel} (t) \otimes \ketbra{\overline n +1}{\overline n +1}_{A''}$, where $\rho_{SA',i}^{\parallel} (t)$ is a convex combination of the states $\{\overline{\rho}_{SA',k}(t) \}_{k=1}^{\overline n }$  (See SM). 
Similarly to Eq. (\ref{PgApix}), we obtain
\begin{equation}\label{generalPg}
P_g \left(\lrho(t) , \{P_{A,i} \}_i \right) \! = \!  \lambda P_g  \left( \{ \overline{p}_i , \rho_{A'',i}^{\perp }  \}_i \right) + \!(1-\lambda)  P_g \left( \{ \overline{p}_i , \rho_{SA',i}^{\parallel}(t)  \}_i \right) .
\end{equation}

In order to understand when $C^{\oPP}_A(\lrho ( t ))$ shows a backflow in $[\tau, \tau + \Delta \tau]$, we write:
\begin{equation}\label{deltaC}
 \Delta {C}_{A}^\oPP \geq P_g \left(\lrho(\tau+\Delta \tau) ,\{ \ketbra{i}{i}_A\}_i \right) - P_g \left(\lrho(\tau) , \{ P_{A,i}^{(\lambda)} \}_i \right)\, .
\end{equation}
We focus on $C^{\oPP}_A(\lrho ( t ))$ at $t=\tau$ for different values of $\lambda$ (we omit the dependence on $\tau$ of some quantities to increase readability). We define the ``optimal'' $\oPP$-POVMs $\{ P_{A,i}^{(\lambda)} \}_i $ to be the $\oPP$-POVMs that at $t=\tau$ solve the maximization that defines $C^{\oPP}_A (\lrho ( \tau ) )$
\begin{equation}
C^{\oPP}_A \left(\lrho ( \tau ) \right)=P_g \left(\lrho ( \tau ), \{ P_{A,i}^{(\lambda)} \}_i \right) - \overline{p}_{max} \, .
\end{equation}
We consider Eq. (\ref{generalPg}) when an optimal $\{ P_{A,i}^{(\lambda)} \}_i$ is chosen. We define the corresponding ensembles that appear in this expression { {$\mathcal{E}^{\perp} (\{ P_{A,i}^{(\lambda)} \}_i)$} and { $\mathcal{E}^{\parallel}{(\{ P_{A,i}^{(\lambda)} \}_i)}$}, namely}
\begin{equation}\label{optimalPg}
{P_g \left(\lrho ( \tau ), \{ P_{A,i}^{(\lambda)} \}_i \right) = \lambda P_g (\mathcal{E}^{\perp} (\{ P_{A,i}^{(\lambda)} \}_i)) + (1-\lambda) P_g(\mathcal{E}^{\parallel} (\{ P_{A,i}^{(\lambda)} \}_i) ) \, .}
\end{equation}

We focus on Eq. (\ref{deltaC}) and we distinguish the two possible scenarios:
\begin{itemize}
\item  (A): one of the optimal measurements is $ \{ P_{A,i}^{(\lambda)} \}_i= \{ \ketbra{i}{i}_A \}_i$ for some $\lambda\in[0,1)$,
\item (B): none of the optimal measurements $\{ P_{A,i}^{(\lambda)} \}_i$ is equal to $\{ \ketbra{i}{i}_A \}_i$ for any $\lambda\in[0,1)$.
\end{itemize}
We start studying case (A).
 In SM we prove that if $ \{ \ketbra{i}{i}_A \}_i $ is an optimal $\oPP$-POVM for some $\lambda^*$,  then the same is true for any  $\lambda \in( \lambda^*,1)$. From Eqs. (\ref{PgBD}), (\ref{PgApix}) and (\ref{deltaC}), for $\lambda \in( \lambda^*,1)$
\begin{equation}
\Delta {C}_{A}^\oPP  \geq (1-\lambda) \left( P_g(\overline{\mathcal{E}}_{SA'}(\tau+\Delta \tau)) - P_g(\overline{\mathcal{E}}_{SA'}(\tau)) \right) >0 \, ,
\end{equation}
if and only if there is no CPTP intermediate map $V_S(\tau+\Delta \tau,\tau)$ for $\Lambda_S(t,t_0)$.

Now we analyze case (B).
In SM we show that for $\lambda=1$ the unique optimal $\oPP$-POVM is $ \{ \ketbra{i}{i}_A \}_i $. Moreover, $P_g( \lrho ( \tau ), \{ P_{A,i} \}_i )$ is Lipschitz continuous in $\lambda$ and $P_g( \rho_{AB} , \{ P_{A,i} \}_i )$ is Lipschitz continuous in $\{ P_{A,i} \}_i$. This implies that the set of optimal $\oPP$-POVMs $\{ P_{A,i}^{(\lambda)} \}_i$ is contained in a neighbourhood of $\{ \ketbra{i}{i}_A \}_i $ with size decreasing towards zero as $\lambda$ approaches $1$. 
This in turn implies that the set of guessing probabilities $P_g(\mathcal{E}^{\parallel} (\{ P_{A,i}^{(\lambda)} \}_i))$ for different $\{ P_{A,i}^{(\lambda)} \}_i$ is contained in an interval that converges on $P_g(\overline{\mathcal{E}}_{SA'}(\tau))$ (See SM for proof). {If we define $P_{g}^{\parallel(\lambda)}\equiv\max_{\{ P_{A,i}^{(\lambda)} \}_i}P_g(\mathcal{E}^{\parallel}{ (\{ P_{A,i}^{(\lambda)} \}_i )})$ and $P_{g}^{\perp(\lambda)} \equiv\max_{\{ P_{A,i}^{(\lambda)} \}_i} P_g(\mathcal{E}^{\perp}{ (\{ P_{A,i}^{(\lambda)} \}_i )})$,}
it holds that
\begin{equation}
{\forall \delta >0,\, \exists \lambda_\delta >0 :  \, P_{g}^{\parallel(\lambda)} - P_g(\overline{\mathcal{E}}_{SA'}(\tau)) < \delta \,\, , \, \forall \lambda \in (\lambda_\delta,1) \, .}
\end{equation}
Hence, for $\overline{\delta} \equiv P_g( \overline{\mathcal{E}}_{SA'}(\tau+\Delta \tau)) - P_g( \overline{\mathcal{E}}_{SA'}(\tau))>0 $ (which is in the form of Eq. (\ref{PgBD})), there exists $\overline \lambda\in [0,1)$ such that
$P_{g}^{\parallel(\lambda)  } -P_g(\overline{\mathcal{E}}_{SA'}(\tau)) < P_g( \overline{\mathcal{E}}_{SA'}(\tau+\Delta \tau)) - P_g( \overline{\mathcal{E}}_{SA'}(\tau)) 
$ for any $ \lambda \in ( \overline{\lambda},1)$. It follows that
\begin{equation}\label{coolestineq}
{P_g( \overline{\mathcal{E}}_{SA'}(\tau+\Delta \tau)) - P_{g}^{\parallel (\lambda)}  >0 \, , \,\, \forall \lambda \in ( \overline{\lambda},1) \, .}
\end{equation}
To conclude, we consider inequalities  (\ref{deltaC}) and (\ref{coolestineq}) for $\lambda\in (\overline \lambda, 1)$ and we obtain a backflow
$$
\Delta {C}_{A}^\oPP \geq P_g \left(\lrho(\tau+\Delta \tau) , \{ \ketbra{i}{i}_A\}_i \right) - P_g \left(\lrho(\tau) , \{ P_{A,i}^{(\lambda)} \}_i \right) 
$$
\begin{equation}\label{finalbackflow}
{\geq\! \lambda \left(1- P_{g}^{\perp (\lambda)} \right) + (1-\lambda) \left( P_g(\overline{\mathcal{E}}_{SA'}(\tau+\Delta \tau)) - P_{g}^{\parallel (\lambda)} \right) > 0  ,}
\end{equation}
if and only if there is no CPTP intermediate map $V_S(\tau+\Delta \tau,\tau)$ for {$\{\Lambda_S(t,t_0)\}_t$}.
\end{proof}

We showed that for every non-Markovian evolution there exist initial probe states $\lrho(t_0)$  that provide at least one backflow of the correlation measure $C_A^\oPP$ if and only if the dynamics is non-Markovian. 
The robustness of this backflow is provided by the following properties that are valid for any {$\{\Lambda_S(t,t_0)\}_t$} and $[\tau,\tau+\Delta \tau]$: the guessing probability $P_g(\rho_{AB}, \{ P_{A,i}\}_i  )$
  is a Lipschitz continuous function of $\rho_{AB}\in \mathcal{S}(\mathcal{H}_A\otimes\mathcal{H}_B)$ and POVMs $\{ P_{A,i}\}_i$ (See SM), $\Pi_A^\oPP (\rho_{AB}^{(\lambda)}(t))$ does not depend on $\lambda$ and $t$, and there  exists a continuous interval of values of $\lambda$ for which $\rho_{AB}^{(\lambda)}(t_0)$ allows backflows of $C^{\oPP}_A (\rho_{AB}^{(\lambda)}(t))$ when there is no CPTP intermediate map $V_S(\tau+\Delta \tau,\tau)$. Therefore, if we add small enough perturbations to $\rho_{AB}^{(\lambda)}(t_0)$ and the optimal $\oPP$-POVMs obtained by the maximization in Eq. (\ref{measure2}), we still obtain backflows of $C_A^\oPP$ for any non-Markovian dynamics.
Hence,  there exists a set of initial states with the same dimension as $\mathcal S(\mathcal{H}_{A}\otimes \mathcal{H}_{B})$ that provide a backflow of $C^{\oPP}_A$ in the scenario described above (See SM for more details).

Since there are no particular assumptions for the structure of $\overline{\mathcal{E}}_{SA'}(t_0) $ \cite{BD}, it is straightforward to adapt our technique to any other ensemble. In particular, if the evolution of an initial ensemble $\{p_i,\phi_{SA',i}\}_{i=1}^n$ provides a backflow of $P_g(\{p_i,\phi_{SA',i}(t)\}_{i=1}^n)$ in a time interval $[\tau,\tau+\Delta \tau]$, we can consider $C^\PP_A(\psi_{AB}^{(\lambda)}(t_0))$, where $\PP=\{p_i\}_{i=1}^n$ and $\psi_{AB}^{(\lambda)}(t_0)= \sum_{i=1}^{n} {p}_i \, \ketbra{i}{i}_A \otimes \left( \lambda \sigma_{SA'} \otimes \ketbra{i}{i}_{A''} + (1-\lambda)\,  {\phi}_{SA',i} \otimes \ketbra{n+1}{n+1} \right) $ and obtain a backflow of $C^\PP_A(\psi_{AB}^{(\lambda)}(t))$ in $[\tau,\tau+\Delta \tau]$.
We make some examples of ensembles (different from $\overline{\mathcal{E}}_{SA'}(t_0)$) that can be considered to witness particular classes of non-Markovian evolutions. A constructive method that provides ensembles of two equiprobable states that witness any bijective or pointwise non-bijective non-Markovian dynamics is given in Ref. \cite{bogna}. The existence of two-state ensembles that detect any image non-increasing evolution, namely such that $\mbox{Im}(\Lambda_t) \subseteq \mbox{Im}(\Lambda_s)$  for any $s<t$, is proven in Ref. \cite{imagenon}. Finally, in Ref. \cite{qubitchrusc} is proven that two-state ensembles are sufficient to witness any non-Markovian qubit evolution.

Similarly to prior measures of non-Markovianity that catch increases of quantities that are monotonically decreasing under Markovian evolutions \cite{RHP,BLP,LFS,BognaChannel0,Volume}, we define the class
\begin{equation}\label{measure}
N^\PP(\{\Lambda_S(t,t_0)\}_t) \equiv \sup_{\rho_{ASA'}(t_0)} \int_{\frac{d}{dt}{C}_A^{\PP}(\rho_{ASA'}(t))>0} \frac{d}{dt}{C}_A^{\PP}(\rho_{ASA'}(t)) dt \, ,
\end{equation}
where the sup is over the possible ancillary systems ($A$ and $A'$) and the initial states $\rho_{ASA'}(t_0)\in \mathcal{S}(\mathcal{H}_A \otimes\mathcal{H}_S \otimes\mathcal{H}_{A'} )$. As a consequence of Theorem 1, if ${C}_A^{\PP}(\rho_{ASA'}(t))$ is differentiable, $N^\oPP(\{\Lambda_S(t,t_0)\}_t)>0$ if and only if the evolution is non-Markovian (See SM for details and a discussion of the non-differentiable case). 
Indeed, for any time interval where the evolution cannot be described by a CPTP intermediate map, we proved the existence of a set of initial states that show an increase of $C_A^\oPP$ in the same time interval. 
We notice that $N^\PP$ with $\PP=\{1/2,1/2\}$ is non-zero for any bijective or pointwise non-bijective non-Markovian evolution \cite{PRA}.

{ \textit{ Discussion.---}}In this work we showed that any non-Markovian dynamics can be witnessed through backflows of $C_A^\oPP$. For this purpose, we introduced a class of initial probe states  $\rho_{AB}^{(\lambda)}(t_0)$ that allows to accomplish this task. Hence, we proved the first one-to-one correspondence between CP-divisibility of evolutions, namely Markovianity, and the absence of correlation backflows.

It would be useful to obtain a constructive method that provides the elements of $\overline{\mathcal E}_{SA'}(t_0)$ that we used to define the initial probe state. Moreover, since the class of bipartite correlations that we studied does not consider the subsystems $A$ and $B$ symmetrically, an open question is to understand if also $C_{AB}^\oPP$ (see Eq. (\ref{measuresymm})) is able to witness any non-Markovian evolution.

The computation required to evaluate the measures of non-Markovianity $N^\PP$ can be significantly demanding. We consider interesting the possibility to formulate simplified versions of these measures (e.g, that require a simplified computation, are specialized to measure evolutions with particular properties).

\begin{acknowledgments}

{\textit{ Acknowledgments.---}} We would like to thank A. Ac{\'\i}n, B. Bylicka and M. Lostaglio for insightful discussions and comments on a previous draft.
Support from the 
Spanish MINECO (QIBEQI FIS2016-80773-P and Severo Ochoa SEV-2015-0522), the
Fundaci\'o Privada Cellex, the
Generalitat de Catalunya (CERCA Program and SGR1381), is acknowledged. D.D.S acknowledge support from the ICFOstepstone programme, funded by the Marie Sk\l odowska-Curie COFUND action (GA665884).

\end{acknowledgments}

 \begin{appendix}

\begin{widetext}

\vspace{0.5cm} 
\begin{center}
{\bf{Equivalence between non-Markovian dynamics and correlation backflows: Supplemental Material}}
\end{center}

\section{Monotonic behavior of $C^\PP_A$ under local operations}\label{LOCC}

We consider a general bipartite finite-dimensional quantum system with Hilbert space $\mathcal{H}_{AB}=\mathcal{H}_A\otimes \mathcal{H}_B$. Therefore, the states that we consider are $\rho_{AB}\in \mathcal S(\mathcal{H}_{AB})$.
We consider a generic finite probability distribution $\PP=\{p_i\}_{i=1}^n$ and we prove that {$C^\PP_A$} is monotone under local operations of the form $\Lambda_A \otimes  I_B$ and $I_A \otimes  \Lambda_B$ on $\rho_{AB}$, where $\Lambda_A$ ($\Lambda_B$) is a CPTP map on $A$ ($B$) and $I_A$ ($I_B$) is the identity map on $\mathcal{S}(\mathcal{H}_A)$ ($\mathcal{S}(\mathcal{H}_B)$).

In order to show the effect  of the application of a local operation of the form  $\Lambda_A \otimes  I_B$ on $C_A^\PP(\rho_{AB})$, we look at $\Pi^\PP_A(\rho_{AB})$ in a different way. Each element of this collection is a $\PP$-POVM for $\rho_{AB}$, i.e., they generate output ensembles where the output probability distribution is $\PP=\{p_i\}_i$. In fact, we can consider $C^\PP_A(\rho_{AB})$  as the maximization over all the possible output ensembles with output probability distribution  $\PP$ that we can generate measuring the subsystem $A$ of $\rho_{AB}$.

The effect of the first local operation that we consider is: $\tilde{\rho}_{AB}=  \Lambda_A \otimes  I_B \, ( \rho_{AB} ) = \sum_k \left( E_k \otimes \mathbbm{1}_B \right)  \rho_{AB}  \left( E_k \otimes \mathbbm{1}_B \right)^\dagger  \, ,$
where $\left\{ E_k \right\}_k$ is a set of Kraus operators that corresponds to $\Lambda_A$. Now we analyze the relation between $\Pi^\PP_A (\rho_{AB}) $ and $\Pi^\PP_A (\tilde{\rho}_{AB})$. Given a $\PP$-POVM for $\tilde \rho_{AB}$, i.e., $\{ P_{A,i}\}_i \in \Pi^\PP_A (\tilde{\rho}_{AB})$, the probabilities and the states of the output ensemble $\mathcal{E} \left( \tilde{\rho}_{AB}, \{ P_{A,i} \}_i\right)$ are
  $ \tr{ \tilde{\rho}_{AB}  P_{A,i} \otimes \mathbbm{1}_B }=p_i$ and $\tilde \rho_{B,i} = \trA{\tilde\rho_{AB}  P_{A,i}\otimes \mathbbm{1}_B }/{p}_i$. Now we write the $i$-th element of the output probability distribution that we obtain applying $\{ P_{A,i}\}_i$ on $\tilde \rho_{AB}$, namely $p_i = \tr{\Lambda_{A}\otimes I_B\, ( \rho_{AB})  \, P_{A,i}\otimes \mathbbm{1}_B }$, as follows
\begin{equation}
\tr{\sum_k (E_k \otimes \mathbbm{1}_B) \rho_{AB}  (E_k^\dagger \otimes \mathbbm{1}_B) P_{A,i}\otimes \mathbbm{1}_B } 
= \tr{\rho_{AB} \sum_k (E_k^\dagger \otimes \mathbbm{1}_B)  P_{A,i}  (E_k \otimes \mathbbm{1}_B) } 
= \tr{\rho_{AB}  \Lambda^*_{A} (P_{A,i} ) \otimes \mathbbm{1}_B } = \tr{\rho_{AB}  \tilde P_{A,i}\otimes  \mathbbm{1}_B }  ,
\end{equation}
where we have defined the operators $\tilde P_{A,i} \equiv \Lambda^*_{A} (P_{A,i} ) =  \sum_k (E_k^\dagger \otimes \mathbbm{1}_B)  P_{A,i}  (E_k \otimes \mathbbm{1}_B)$. Similary, we can write $\tilde{\rho}_{B,i}=\mbox{Tr}_A[ \tilde\rho_{AB}  P_{A,i} ] / p_i = \mbox{Tr}_A[ \rho_{AB}  \tilde P_{A,i} ] / p_i $. Therefore, since 
$p_i= \mbox{Tr}[ \rho_{AB}  \tilde P_{A,i} ] $ and $\tilde \rho_{B,i}=\mbox{Tr}_A[ \rho_{AB}  \tilde P_{A,i} ] / p_i$, if we apply $\{ \tilde P_{A,i} \}_i$ on $\rho_{AB}$ we obtain the same $\PP$-distributed output ensemble $\{p_i,\tilde \rho_{B,i}\}_i$ that we obtain applying $\{P_{A,i}\}_i$ on $\tilde \rho_{AB}$.
Next we show that:
$\{ \tilde{P}_{A,i} \}_i =\left\{ \Lambda^*_A \left( P_{A,i} \right) \right\}_i = \{ \sum_k E_k^\dagger P_{A,i}  E_k \}_i \, ,
$
is a proper $n$-output POVM. First, the elements of $\{ \tilde{P}_{A,i} \}_i$ sum up to the identity: 
$
\sum_i \tilde{P}_{A,i} =   \sum_{k,i} E_k^\dagger \, P_{A,i} \,  E_k = \sum_{k} E_k^\dagger \, \left( \sum_i P_{A,i} \right) \, E_k = \sum_{k} E_k^\dagger\,  E_k = \mathbbm{1}_B \, .
$
Moreover, we show that they are positive semi-definite operators. Indeed, for any $\ket{\psi}_A\in \mathcal{H}_A$, we have $\bra{\psi}_A\tilde{P}_{A,i}  \ket{\psi}_A=\sum_k (\bra{\psi}_A E_k^\dagger) \,  P_{A,i} \, ( E_k  \ket{\psi}_A) = \sum_k \bra{\psi_k}_A  P_{A,i} \ket{\psi_k}_A\geq 0$, where each element of the last sum is non-negative because $P_{A,i}$ is positive semi-definite.
 It follows that $\{ \tilde{P}_{A,i} \}_i$ is a POVM and in particular a $\PP$-POVM for $\rho_{AB}$, i.e., $\{ \tilde{P}_{A,i} \}_i\in \Pi^\PP_A(\rho_{AB})$.  Thus, for every $\PP$-POVM $\{P_{A,i}\}_i \in \Pi^\PP_A(\tilde\rho_{AB})$ for $\tilde \rho_{AB}$, there is a $\PP$-POVM $\{\tilde P_{A,i}\}_i \in \Pi^\PP_A(\rho_{AB})$ for $\rho_{AB}$, such that the output ensembles are identical: $\mathcal{E} (\tilde \rho_{AB}, \{P_{A,i}\}_i) = \mathcal{E} (\rho_{AB}, \{\tilde P_{A,i} \}_i )$. Hence, any $\PP$-distributed ensemble of $B$ that can be generated from $\tilde{\rho}_{AB}$ can also be obtained from $\rho_{AB}$. Therefore, we obtain the following inclusion 
\begin{equation}\label{EI}
\bigcup_{ \{P_{A,i} \}_i \in \Pi^\PP_A (\tilde{\rho}_{AB} ) }   \!\!\!\!\!\!  \mathcal{E} \left( \tilde{\rho}_{AB}, \, \{ P_{A,i} \}_i \right)\,  \subseteq \bigcup_{ \{P_{A,i} \}_i \in \Pi^\PP_A ({\rho}_{AB} ) }    \!\!\!\!\!\! \mathcal{E} \left( {\rho}_{AB}, \, \{ P_{A,i} \}_i \right) \, .
\end{equation}
Finally, since as we said above $C^\PP_A(\rho_{AB})$ is the maximum guessing probability of the $\PP$-distributed output ensembles that can be generated from $\rho_{AB}$, from Eq. (\ref{EI}) we conclude that  $C^\PP_A(\rho_{AB})$ is defined as a maximization over a set that includes the set over which maximization defines $C^\PP_A(\tilde\rho_{AB})$. Hence, for any state $\rho_{AB}$ and CPTP map $\Lambda_A$, we obtain
\begin{equation}\label{monA2}
C^\PP_A \left( \rho_{AB} \right) \geq C^\PP_A \left(  \Lambda_A \otimes  I_B \,  ( \rho_{AB}) \right)  .
\end{equation}

Next we show that $C^\PP_A(\rho_{AB})$ is monotonic under local operations of the form $I_A \otimes  \Lambda_B$. We find that the collection of the $\PP$-POVMs for $\tilde{\rho}_{AB} =  I_A \otimes  \Lambda_B \, ( {\rho}_{AB})$, namely $\Pi^\PP_A (\tilde{\rho}_{AB})$, coincides with $\Pi^\PP_A (\rho_{AB})$. 
In order to prove this, we apply a general POVM $\{P_{A,i}\}_i$ both on $\rho_{AB}$ and $\tilde \rho_{AB}$ and we show that the respective output ensembles are defined by the same probability distribution. Indeed, being $\tr{\rho_{AB} P_{A,i}}$ ($\tr{ I_A\otimes \Lambda_B \, ( \rho_{AB})  P_{A,i} }$) the probability for the $i$-th output of the POVM considered when it is applied on $\rho_{AB}$ ($\tilde\rho_{AB}$), we have $\tr{ I_A\otimes \Lambda_B \, ( \rho_{AB})  P_{A,i} } = \tr{\rho_{AB}  P_{A,i} }$, where this identity uses the trace-preserving property of the superoperator $I_A\otimes \Lambda_B$. Consequently, if $\{P_{A,i}\}_i$ is a $\PP$-POVM for $\rho_{AB}$, which means that $\tr{\rho_{AB} P_{A,i}}=p_i$, in the same way $\tr{ I_A\otimes \Lambda_B \, ( \rho_{AB})  P_{A,i} }=p_i$. Hence, $\{ P_{A,i}\}_i\in \Pi^\PP_A(\rho_{AB})$ if and only if $\{ P_{A,i}\}_i\in \Pi^\PP_A(\tilde \rho_{AB})$, i.e.,
\begin{equation}\label{uguale}
\Pi^\PP_A (\rho_{AB} ) = \Pi^\PP_A (\tilde{\rho}_{AB}) \, .
\end{equation}
Given a $\PP$-POVM $\{P_{A,i}\}_i$ both for $\rho_{AB}$ and $\tilde{\rho}_{AB}$, we compare the corresponding output states
\begin{equation}\label{rhobicontratti}
\tilde{\rho}_{B,i} = \Lambda_B \, ( \trA{ \rho_{AB} P_{A,i} \otimes \mathbbm{1}_B  }  / p_i ) = \Lambda_B (\rho_{B,i}) \, .
\end{equation}
From Eq. (\ref{rhobicontratti}) and the definition of the guessing probability, it follows that
\begin{equation}\label{monB0}
P_g\left(  \left\{ p_i, \, \rho_{B,i}  \right\}_i  \right) \geq P_g\left(  \left\{ p_i, \, \Lambda_B (\rho_{B,i})  \right\}_i  \right) \, .
\end{equation}
The consequence of the last relation is that for any $\PP$-distributed output ensemble ensemble that we can generate from $\tilde \rho_{AB}$ there exists at least one $\PP$-distributed output ensemble that we can generate from $\rho_{AB}$ for which the guessing probability is equal or greater. Hence, considering the definition of $C^\PP_A$, Eqs. (\ref{uguale}) and (\ref{monB0}), we conclude that
\begin{equation}\label{monB}
C_A^\PP \left( \rho_{AB} \right) \geq C_A^\PP \left(  I_A \otimes \Lambda_B   \,  ( \rho_{AB}) \right) \, ,
\end{equation}
for any state $\rho_{AB}$ and CPTP map $\Lambda_B$.

\section{Performing $\oPP$-POVMs on the probe state: the orthogonal and the parallel components}

In this section we prove that, if we apply the projective $\oPP$-POVM $\{ \ketbra{i}{i}_A\}_i $ on $A$ for $\lrho(t)$, we obtain
\begin{equation}\label{PgApix}
P_g \left(\lrho(t) , \{ \ketbra{i}{i}_A\}_i  \right) = \lambda + (1-\lambda) \, P_g\left(\overline{\mathcal{E}}_{SA'}(t) \right)  .
\end{equation}
Moreover, for a general $\oPP$-POVM on $A$ for $\lrho(t)$ different from $\{ \ketbra{i}{i}_A\}_i $, we have
\begin{equation}\label{generalPg}
P_g \left(\lrho(t) , \{P_{A,i} \}_i \right) \! = \! \lambda P_g \left( \{ \overline{p}_i , \rho_{A'',i}^{\perp }  \}_i \right) + (1-\lambda)  P_g \left( \{ \overline{p}_i , \rho_{SA',i}^{\parallel}(t)  \}_i \right) ,
\end{equation}
for some $\{ \rho_{A'',i}^{\perp }  \}_i$ and $\{ \rho_{SA',i}^{\parallel}(t)  \}_i$ that we define.
First, we notice that the projective measurement $\{ \ketbra{i}{i}_A\}_{i=1}^{\overline{n}} $  is a $\oPP$-POVM on $A$ for $\lrho (t)$  for any $t$ and  $\lambda$. We consider $\mathcal{E} (\lrho(t) , \{ \ketbra{i}{i}_A\}_i )$, namely the ensemble of $B$ that we obtain measuring $\lrho(t)$ with $\{ \ketbra{i}{i}_A\}_i$:
\begin{equation}\label{EnsPi}
\mathcal{E} (\lrho(t) , \{ \ketbra{i}{i}_A\}_i )= \left\{\overline{p}_i, \lambda \, \sigma_{SA'} (t) \otimes \ketbra{i}{i}_{A''} + (1-\lambda) \overline{\rho}_{B,i} (t) \right\}_{i=1}^{\overline{n}}  \, ,
\end{equation}
where $\overline{\rho}_{B,i}  =    \overline{\rho}_{SA',i}\otimes \ketbra{\overline{n}+1}{\overline{n}+1}_{A''}$.
We evaluate the guessing probability of this ensemble and we obtain
$$
P_g (\lrho(t) , \{ \ketbra{i}{i}_A\}_i )  =  \max_{ \{ P_{B,i} \}_i } \sum_{i=1}^{\overline{n}} \overline{p}_i \, \trB{\left( \lambda\sigma_{SA'}(t) \otimes \ketbra{i}{i}_{A''} +  (1-\lambda) \overline{\rho}_{SA',i} (t)  \otimes \ketbra{\overline{n}+1}{\overline{n}+1}_{A''} \right)  P_{B,i} } 
$$
\begin{equation}\label{perppar0}
=\max_{ \{ P_{B,i} \}_i } \left( \lambda \sum_{i=1}^{\overline{n}} \overline{p}_i \, \trB{ \sigma_{SA'}(t) \otimes \ketbra{i}{i}_{A''}    \, P_{B,i} } + (1-\lambda) \sum_{i=1}^{\overline{n}}  \overline{p}_i \trB{\overline{\rho}_{SA',i} (t)  \otimes \ketbra{\overline{n}+1}{\overline{n}+1}_{A''} \,  P_{B,i} } \right)  .
\end{equation}

We notice that, for any $i=1,\dots, \overline n$, every state that belongs to the set $\{\sigma_{SA'}(t) \otimes \ketbra{i}{i}_{A''}    \}_i$  is orthogonal to every state of the set $\{\overline{\rho}_{SA',i} (t)  \otimes \ketbra{\overline{n}+1}{\overline{n}+1}_{A''}\}_i$. It follows that, for any $i=1,\dots, \overline n$, the value of $\trB{\sigma_{SA'}(t) \otimes \ketbra{i}{i}_{A''}    \, P_{B,i}}$ depends only on the components of $P_{B,i}$ that belong to span($\{ \ketbra{i}{j}_B \}_{ij}$), where $\ket{i}_B$ and $\ket{j}_B $ belong to the tensor product between the elements of $\mathcal{M}_{SA'}$, i.e., an orthonormal basis of $\mathcal{H}_{S}\otimes \mathcal{H}_{A'}$, and $\{ \ket{k}_{A''} \}_{k=1}^{\overline n}$ (notice that dim($\mathcal{H}_{A''}) =\overline n +1$). Similarly, for any $i=1,\dots, \overline n$, the value of $\trB{\overline{\rho}_{SA',i} (t)  \otimes \ketbra{\overline{n}+1}{\overline{n}+1}_{A''} \,  P_{B,i}}$ depends only on the components of $P_{B,i}$ that belong to span($\{ \ketbra{i'}{j'}_B \}_{i'j'}$), where $\ket{i'}_B$ and $\ket{j'}_B $ belong to the tensor product between the elements of $\mathcal{M}_{SA'}$ and $ \ket{\overline{n}+1}_{A''} $. We further note that no operator defined on span($\{ \ketbra{i}{j}_B \}_{ij}$)$\oplus$span($\{ \ketbra{i'}{j'}_B \}_{i'j'}$) that is not positive semidefinite can be made positive semidefinite by adding something outside span($\{ \ketbra{i}{j}_B \}_{ij}$)$\oplus$span($\{ \ketbra{i'}{j'}_B \}_{i'j'}$). Therefore, we can limit the maximization in Eq. (\ref{perppar0}) to be over POVMs $P_{B,i}$ that are defined on span($\{ \ketbra{i}{j}_B \}_{ij}$)$\oplus$span($\{ \ketbra{i'}{j'}_B \}_{i'j'}$), without affecting the optimal value. 
Since span($\{ \ketbra{i}{j}_B \}_{ij}$) is orthogonal to span($\{ \ketbra{i'}{j'}_B \}_{i'\!j'}$), the maximization in Eq. (\ref{perppar0}) can be divided in two independent maximizations
$$
P_g (\lrho(t) ,\{ \ketbra{i}{i}_A\}_i   ) =\lambda \max_{ \{ P_{B,i} \}_i }  \sum_{i=1}^{\overline{n}} \overline{p}_i \, \trB{ \sigma_{SA'}(t) \otimes \ketbra{i}{i}_{A''}    \, P_{B,i} } + (1-\lambda) \max_{ \{ P_{B,i} \}_i } \sum_{i=1}^{\overline{n}}  \overline{p}_i \trB{\overline{\rho}_{SA',i} (t)  \otimes \ketbra{\overline{n}+1}{\overline{n}+1}_{A''} \,  P_{B,i} } 
$$
$$
=\lambda P_g( \{\overline p_i , \sigma_{SA'}(t) \otimes \ketbra{i}{i}_{A''} \}_i) + (1-\lambda)  P_g( \{\overline p_i , \overline{\rho}_{SA',i} (t)  \otimes \ketbra{\overline{n}+1}{\overline{n}+1}_{A''} \}_i )
$$
\begin{equation}\label{PgApi}
=\lambda P_g( \{\overline p_i ,  \ketbra{i}{i}_{A''} \}_i) + (1-\lambda)  P_g( \{\overline p_i , \overline{\rho}_{SA',i} (t)   \}_i ) = \lambda + (1-\lambda) P_g(\overline{\mathcal{E}}_{SA'}(t) ) \, , 
\end{equation} 
where we have used $P_g ( \{ \overline{p}_i , \ketbra{i}{i}_{A''} \}_i ) =1$, namely the possibility to perfectly  distinguish ensembles of orthonormal states, $ P_g( \{\overline p_i , \sigma_{SA'}(t) \otimes \ketbra{i}{i}_{A''} \}_i)= P_g( \{\overline p_i ,  \ketbra{i}{i}_{A''} \}_i)$ and $ P_g( \{\overline p_i , \overline{\rho}_{SA',i} (t)  \otimes \ketbra{\overline{n}+1}{\overline{n}+1}_{A''} \}_i ) =  P_g( \{\overline p_i , \overline{\rho}_{SA',i} (t)  \}_i )=P_g(\overline{\mathcal{E}}_{SA'}(t) )$.

The output ensemble that we obtain applying a generic $\oPP$-POVM $ \{ P_{A,i} \}_i $ on $A$ for $\lrho(t)$ different from $  \{ \ketbra{i}{i}_A\}_i $  is $\mathcal{E} (\lrho (t), \{ P_{A,i} \}_i )$. The $k$-th state of this ensemble is
$$
  \rho^{(\lambda)}_{B,k}(t)= \frac{ \trA{\lrho(t)   P_{A,k}   \otimes \mathbbm{1}_B } }{\overline{p}_k} =
 \sum_{i=1}^{\overline{n}}  \frac{ \overline{p}_i}{\overline{p}_k}   \, \trA{ \ketbra{i}{i}_A  P_{A,k} } \left(\lambda\sigma_{SA'}(t) \otimes \ketbra{i}{i}_{A''} +  (1-\lambda) \overline{\rho}_{B,i} (t)   \right) $$
\begin{equation}
= \sum_{i=1}^{\overline{n}}  \frac{ \overline{p}_i \left(  P_{A,k} \right)_{ ii} }{\overline{p}_k}  \left( \lambda\sigma_{SA'}(t) \otimes \ketbra{i}{i}_{A''}+ (1-\lambda) \overline{\rho}_{B,i} (t) \right)  \, ,
\end{equation}
where $( P_{A,k} )_{ii} = \bra{i}_A P_{A,k} \ket{i}_A  \geq 0$ is the $i$-th diagonal element of $P_{A,k}$ in the basis $\mathcal{M}_A = \{\ket{i}_A\}_{i=1}^{\overline n}$. Keeping in mind that $\oPP$ is a finite probability distribution and $\overline{p}_k > 0$ for any $k$, we define the parameters $e_{i k}\equiv ( P_{A,k} )_{ii}\overline{p}_i/\overline{p}_k \geq 0$. Since  $\rho_{B,k}^{(\lambda)} (t)$ and the states $\lambda\sigma_{SA'}(t) \otimes \ketbra{i}{i}_{A''}+ (1-\lambda) \overline{\rho}_{B,i} (t) $ are trace one operators for any $i=1,\dots,\overline n$ , we conclude that  $\sum_i e_{i k} =1$ for any $k=1,\dots,\overline n$. Therefore,  $\{e_{ik}\}_{i=1}^{\overline n}$ is an $\overline n$-element probability distribution for any value of $k=1,  ..., \overline n$. We write:
\begin{equation}\label{ll}
\rho_{B,k}^{(\lambda)} (t) = \sum_{i=1}^{\overline{n}}e_{ik} \left( \lambda\sigma_{SA'}(t) \otimes \ketbra{i}{i}_{A''} + (1-\lambda) \overline{\rho}_{B,i} (t) \right) 
 = \lambda \sum_{i=1}^{\overline{n}} e_{ik} \sigma_{SA'}(t) \otimes \ketbra{i}{i}_{A''} + (1-\lambda)\sum_{i=1}^{\overline{n}}e_{ik}  \overline{\rho}_{B,i} (t) = \lambda \sigma_{B,k}^{\perp }(t) + (1-\lambda) \sigma_{B,k}^{\parallel } (t) ,
\end{equation}
where we have used the definitions
\begin{equation}\label{sigmaperp}
\sigma_{B,k}^{\perp } (t) \equiv \sigma_{SA'}(t) \otimes \left( \sum_{i=1}^{\overline{n}} e_{ik}  \ketbra{i}{i}_{A''} \right) \equiv \sigma_{SA'}(t) \otimes \rho_{A'',k}^{\perp } \, ,
\end{equation}
\begin{equation}\label{sigmaparallel}
\sigma_{B,k}^{\parallel } (t) \equiv  \left( \sum_{i=1}^{\overline{n}}e_{ik} \, \overline{\rho}_{SA',i}  (t) \right) \otimes \ketbra{\overline{n}+1}{\overline{n}+1}_{A''} \equiv \rho_{SA',k}^{\parallel } (t)  \otimes \ketbra{\overline{n}+1}{\overline{n}+1}_{A''} .
\end{equation}
Each state $\rho_{A'',k}^{\perp }$  ($\rho_{SA',k}^{\parallel }(t)$)  is a convex combination of the states $\{  \ketbra{i}{i}_{A''} \}_{i=1}^{\overline n}$ ($\{\overline{\rho}_{SA',i}(t) \}_{i=1}^{\overline{n}}$) that does not depend on $\lambda$ but depends on the $\oPP$-POVM $\{P_{A,i}\}_i$ chosen. From Eq. (\ref{ll}) it follows that, if we consider a generic $\oPP$-POVM $\{P_{A,i} \}_i$ for $\lrho(t)$, we obtain
\begin{equation}\label{EnsJ}
\mathcal{E} (\lrho (t), \{P_{A,i} \}_i)= \{ \overline{p}_i, \lambda \sigma_{B,i}^{\perp }(t) + (1-\lambda) \sigma_{B,i}^{\parallel } (t)\}_i \, ,
\end{equation}
and therefore, similarly to Eq. (\ref{PgApi}), now we can write
\begin{equation}\label{boh4}
P_g(\lrho(t) , \{P_{A,i} \}_i ) = 
\lambda P_g ( \{ \overline{p}_i , \rho_{A'',i}^{\perp } \}_i ) + (1-\lambda)  P_g ( \{ \overline{p}_i , \rho_{SA',i}^{\parallel}(t)  \}_i ) \, .
\end{equation}

\section{Analysis of case (A)}\label{caseA}

Let assume that for some $\alpha\in[0,1)$ we have that $\{ P_{A,i}^{(\alpha)}\}_i = \{ \ketbra{i}{i}_A \}_i$, i.e., this projective measurement is one of the optimal $\oPP$-POVM that accomplishes the maximization for $C^\oPP_A(\rho^{(\alpha)}_{AB}(\tau))$, and that for some $\beta> \alpha$ instead we have that $\{ \ketbra{i}{i}_A \}_i$ is not optimal. In this section we show that these two assumptions are incompatible and lead to a contradiction. The first condition implies that, when $\lambda=\alpha$ the optimal $\oPP$-POVM that provides the greatest value of $P_g(\rho_{AB}^{(\alpha)}(\tau), \{P_{A,i}\}_i)$  is  $\{ P_{A,i}^{(\alpha)}\}_i=\{ \ketbra{i}{i}_A \}_i$ and therefore 
$$
\alpha + (1-\alpha) P_g(\overline{\mathcal{E}}_{SA'}(\tau)) \geq  \alpha P_g (\mathcal{E}^{\perp (\beta)}) + (1-\alpha) P_g  (\mathcal{E}^{\parallel (\beta)})  \, ,
$$
\begin{equation}\label{alpha}
\alpha \left(1- P_g (\mathcal{E}^{\perp (\beta)}) \right) + (1-\alpha) \left(P_g(\overline{\mathcal{E}}_{SA'}(\tau)) -  P_g  (\mathcal{E}^{\parallel (\beta)}) \right) \geq 0 \, , 
\end{equation}
where we also considered the cases where  $\{ P_{A,i}^{(\beta)}\}_i $ is optimal both for $\lambda=\alpha$ and $\lambda=\beta$.
 On the other hand, for $\lambda=\beta>\alpha$ we have that $\{ \ketbra{i}{i}_A \}_i$ is not an optimal $\oPP$-POVM for the maximization needed for $C^\oPP_A(\rho^{(\beta)}_{AB}(\tau))$ and
\begin{equation}\label{beta}
\beta P_g (\mathcal{E}^{\perp (\beta)}) + (1-\beta) P_g  (\mathcal{E}^{\parallel (\beta)})  > \beta + (1-\beta) P_g(\overline{\mathcal{E}}_{SA'}(\tau))  \, ,
\end{equation}
which can be written as
\begin{equation}\label{48}
\beta \left(P_g (\mathcal{E}^{\perp (\beta)}) -1 +P_g(\overline{\mathcal{E}}_{SA'}(\tau))- P_g (\mathcal{E}^{\parallel (\beta)}) \right)   > P_g(\overline{\mathcal{E}}_{SA'}(\tau)) -P_g(\mathcal{E}^{\parallel (\beta)} ) \, ,
\end{equation}
and therefore, subtracting the quantity $\alpha \left(P_g (\mathcal{E}^{\perp (\beta)}) -1 +P_g(\overline{\mathcal{E}}_{SA'}(\tau))- P_g (\mathcal{E}^{\parallel (\beta)}) \right)$ from each side of inequality (\ref{48}), we obtain
\begin{equation}\label{beta2}
(\beta - \alpha) \left(P_g (\mathcal{E}^{\perp (\beta)}) -1 +P_g(\overline{\mathcal{E}}_{SA'}(\tau))- P_g (\mathcal{E}^{\parallel (\beta)}) \right)  > \alpha \left(1- P_g (\mathcal{E}^{\perp (\beta)}) \right) + (1-\alpha) \left( P_g(\overline{\mathcal{E}}_{SA'}(\tau))  - P_g  (\mathcal{E}^{\parallel (\beta)}) \right) \, .
\end{equation}
If inequality (\ref{beta}) holds, then $P_g  (\mathcal{E}^{\parallel (\beta)}) >  P_g(\overline{\mathcal{E}}_{SA'}(\tau)) $. Therefore, $  P_g(\overline{\mathcal{E}}_{SA'}(\tau)) -P_g  (\mathcal{E}^{\parallel (\beta)})<0$ and we conclude that the left-hand side of inequality (\ref{beta2}) is negative. The right-hand side of the same inequality is instead non-negative for inequality (\ref{alpha}). This contradiction shows that if for some value of the parameter $\lambda$ the orthogonal measurement $\{ \ketbra{i}{i}_A \}_i$ maximizes $P_g(\lrho(\tau), \{P_{A,i}\}_i)$, then it is also the case for any greater value of $\lambda$. In conclusion, if one of the optimal measurement is $\{ \ketbra{i}{i}_A \}_i$ for $\lambda=\alpha$, the same is true for any $\beta \in [\alpha,1)$.

\section{Study of the limit $\lambda \rightarrow 1$ in case (B)}\label{appD}
First, we notice that the set of $\oPP$-POVMs on $A$ for $\lrho (t)$ is a set that does not depend on $\lambda$ and $t$. Indeed, we use the notation $ \Pi_A^{\oPP}= \Pi_A^\oPP (\lrho (\tau))$. 

Now we prove that the only optimal $\oPP$-POVM for $C^\oPP_A(\rho_{AB}^{(1)}(\tau))$ is the projective measurement $\{ \kbra{i}_A \}_i$. In the case of an optimal $\{P_{A,i}\}_i\in \Pi_A^\oPP$ for $\rho_{AB}^{(1)}(\tau)$ we obtain the output ensemble (see Eq. (\ref{sigmaperp}))
\begin{equation}
\mathcal{E}(\rho_{AB}^{(1)}(\tau), \PAi) = \{ \overline p_i ,\sigma_{SA'}(\tau) \otimes\sum_{j} e_{ji} \kbra{j}_{A''}    \}_i \, ,
\end{equation}
where $\sum_j e_{ji}=1$ for any $i=1,\dots,\overline n$. Since $P_g(\rho_{AB}^{(1)}(\tau), \{\kbra{i}_{A}\}_i)=1$, an optimal $\oPP$-POVM different from $\{ \kbra{i}_{A}\}_i$ must provide an output  ensemble $\mathcal{E}(\rho_{AB}^{(1)}(\tau), \PAi) $ of orthogonal states. Given the identity $P_g(\mathcal{E}(\rho_{AB}^{(1)}(\tau), \PAi) =P_g(  \{ \overline p_i ,\sum_{j} e_{ji} \kbra{j}_{A''}    \}_i)$, we have to check if, for some $e_{ij}$, the ensemble $\{\overline p_i , \sum_{j} e_{ji} \kbra{j}_{A''} \}_i   $  can be an orthogonal ensemble of states different from $\{\overline p_i\, \kbra{i}_{A''}\}_i$. Each state $\rho^\perp_{A'',i}=\sum_{j} e_{ji} \kbra{j}_{A''}$ is defined as a convex combination of the states $\{ \kbra{i}_{A''}\}_i$. Two such states are orthogonal only if the respective convex combinations do not have any element $\kbra{i}_{A''}$ in common. Therefore, the only way to have $\bar{n}$ orthogonal output states is if for each $i$ the state is of the form $\rho^\perp_{A'',i}= \kbra{j}_{A''}$ for some $j=j(i)$ exclusively assigned to $i$. Thus, each $P_{A,i}$ has only one nonzero diagonal element $( P_{A,i} )_{jj}= \bra{j}_A P_{A,i} \ket{j}_A$. Since $\sum_i P_{A,i}=\mathbbm{1}_A$ this is only possible if $\PAi=\{ \kbra{i}_A \}_i$.

We proved that $\{\kbra{i}_A\}_i \in \Pi_A^\oPP $ is the only optimal $\oPP$-POVM for the evaluation of $C^\oPP_A(\rho_{AB}^{(1)}(\tau))$. Therefore, for any $\oPP$-POVM  $\{ P_{A,i}\}_i \neq \{\kbra{i}_A \}_i$ we have that $P_g(\rho_{AB}^{(1)}(\tau), \{ P_{A,i}\}_i  ) < 1$.
We notice that the set $\Pi_A^\oPP$ is closed and bounded, i.e., it is compact. Indeed, it is a subset of $\mathcal{B}(\mathcal{H}_A)$ that is defined through linear constraints involving identities and relations of semi-positivity. The guessing probability $P_g(\rho_{AB}^{(1)}(\tau), \{ P_{A,i}\}_i  )$ is a continuous function on this compact set of $\oPP$-POVMs.

We now show that $P_g(\rho_{AB}^{(\lambda)}(\tau), \{ P_{A,i}\}_i  )$ is Lipschitz continuous in $\lambda$. In other words we construct a bound on the change of the guessing probability for a given change in $\lambda$. To do so we first show that $P_g(\rho_{AB}, \{ P_{A,i}\}_i  )$ is Lipschitz continuous on the set of states.
Consider $P_g(\rho_{AB}, \{ P_{A,i}\}_i  )$ as a function of $\rho_{AB}$.
We consider a pair $\rho_{AB}^{1}$, $\rho_{AB}^{2}$ and observe that

\begin{eqnarray}\label{firstineq}
\max_{\{ P_{B,i} \}_i } \sum_i\mbox{Tr}[ P_{A,i}\otimes P_{B,i} \rho_{AB}^{1}]&=&\max_{ \{ P_{B,i} \}_i } \sum_i\mbox{Tr}[ P_{A,i}\otimes P_{B,i} (\rho_{AB}^{2}+(\rho_{AB}^{1}-\rho_{AB}^{2})]\nonumber\\
&\leq &
\max_{\{ P_{B,i} \}_i } \sum_i\mbox{Tr}[ P_{A,i}\otimes P_{B,i} \rho_{AB}^{2}]+\max_{ \{ P_{B,i} \}_i } \sum_i\mbox{Tr}[ P_{A,i}\otimes P_{B,i} (\rho_{AB}^{1}-\rho_{AB}^{2})].
\end{eqnarray}
Let $\Delta$ be a diagonal matrix such that $\Delta= U(\rho_{AB}^{1}-\rho_{AB}^{2})U^\dagger$ for a unitary $U$. Let $\Delta_+$ and $\Delta_-$ be the two diagonal positive semidefinite matrices such that $\Delta=\Delta_+-\Delta_-$. Note that $U^\dagger\Delta_+U$ and $U^\dagger\Delta_-U$ are positive semidefinite. This implies

\begin{eqnarray}
\max_{ \{ P_{B,i} \}_i }\sum_i \mbox{Tr}[P_{A,i}\otimes P_{B,i} (\rho_{AB}^{1}-\rho_{AB}^{2})]&=&
\max_{ \{ P_{B,i} \}_i }\sum_i \mbox{Tr}[P_{A,i}\otimes P_{B,i} U^\dagger(\Delta_+-\Delta_-)U]\nonumber\\
&\leq &
\max_{ \{ P_{B,i} \}_i }\sum_i \mbox{Tr}[P_{A,i}\otimes P_{B,i}(U^\dagger\Delta_+U)]+\max_{ \{ P_{B,i} \}_i }\sum_i \mbox{Tr}[P_{A,i}\otimes P_{B,i} (U^\dagger\Delta_-U)].
\end{eqnarray}
Since POVM elements are positive semidefinite $\mbox{Tr}[P_{A,i}\otimes  P_{B,j} (U^\dagger\Delta_+U)]$ is positive for each pair $P_{A,i}$, $P_{B,j}$.
Therefore $\mbox{Tr}[\sum_iP_{A,i}\otimes P_{B,i} (U^\dagger\Delta_+U)]\leq\mbox{Tr}[\sum_iP_{A,i}\otimes \sum_j P_{B,j} (U^\dagger\Delta_+U)]=\mbox{Tr}[U^\dagger\Delta_+U]=\mbox{Tr}[\Delta_+]$. Likewise $\sum_i\mbox{Tr}[P_{A,i}\otimes P_{B,i} (U^\dagger\Delta_-U)]\leq \mbox{Tr}[\Delta_-]$. Thus,
\begin{equation}\label{thirdineq}
\max_{ \{ P_{B,i} \}_i }\sum_i \mbox{Tr}[P_{A,i}\otimes  P_{B,i}(\rho_{AB}^{1}-\rho_{AB}^{2})]\leq \mbox{Tr}[\Delta_++\Delta_-]=||\rho_{AB}^{1}-\rho_{AB}^{2}||_1.
\end{equation}
Considering Eqs. (\ref{firstineq}) and (\ref{thirdineq}) we can now conclude that
\begin{equation}
P_g(\rho_{AB}^{1}, \{ P_{A,i}\}_i  )-P_g(\rho_{AB}^{2}, \{ P_{A,i}\}_i  )\leq ||\rho_{AB}^{1}-\rho_{AB}^{2}||_1.
\end{equation}
By exchanging the $1$ and $2$ in the above derivation we obtain 
\begin{equation}
P_g(\rho_{AB}^{2}, \{ P_{A,i}\}_i  )-P_g(\rho_{AB}^{1}, \{ P_{A,i}\}_i  )\leq ||\rho_{AB}^{1}-\rho_{AB}^{2}||_1.
\end{equation}
Thus
\begin{equation}\label{lip1}
|P_g(\rho_{AB}^{1}, \{ P_{A,i}\}_i  )-P_g(\rho_{AB}^{2}, \{ P_{A,i}\}_i  )|\leq ||\rho_{AB}^{1}-\rho_{AB}^{2}||_1.
\end{equation}
Note that this bound is independent of $\{ P_{A,i}\}_i$. Thus we see that $P_g(\rho_{AB}, \{ P_{A,i}\}_i  )$ is Lipschitz continuous on the set of states.
Next we consider the pair $\rho_{AB}^{(\lambda_1)}(\tau),\rho_{AB}^{(\lambda_2)}(\tau)$ and note that the trace norm $||\rho_{AB}^{(\lambda_1)}(\tau)-\rho_{AB}^{(\lambda_2)}(\tau)||_1=2|\lambda_1-\lambda_2|$.
Therefore,
\begin{equation}\label{epp}
|P_g(\rho_{AB}^{(\lambda_1)}(\tau), \{ P_{A,i}\}_i  )-P_g(\rho_{AB}^{(\lambda_2)}(\tau), \{ P_{A,i}\}_i  )|\leq 2|\lambda_1-\lambda_2|.
\end{equation}
Thus we see that $P_g(\rho_{AB}^{(\lambda)}(\tau), \{ P_{A,i}\}_i  )$ is Lipschitz continuous in $\lambda$.

We next consider how the set of optimal $\oPP$-POVMs converges to $\{ \kbra{i}_{A}\}_i$ as $\lambda\to 1$ using the bound in Eq. (\ref{epp}).
Consider a semi-open neighbourhood $O_1$ of the projective $\oPP$-POVM $\{ \kbra{i}_{A}\}_i$ such that the set $S_1\equiv \Pi_A^{\oPP}-O_1$ of $\oPP$-POVMs not in $O_1$ is closed. Since the set $S_1$ is closed and bounded and $P_g(\rho_{AB}^{(1)}(\tau),\{ P_{A,i}\}_i)$ is a continuous function on $\Pi_A^{\oPP}$ there exists a maximum value $m_1<1$ of $P_g(\rho_{AB}^{(1)}(\tau), \{ P_{A,i}\}_i)$ on $S_1$, i.e.,
$m_1 \equiv \max_{\{P_{A,i}\}_i \in S_1} P_g(\rho_{AB}^{(1)}(\tau), \{P_{A,i}\}_i ) < 1 
$. 
Then, due to Eq. (\ref{epp}), for $\epsilon>0$ and $\lambda=1-\epsilon$ it holds that $P_g(\rho_{AB}^{(1-\epsilon)}(\tau), \{ P_{A,i}\}_i)\leq m_1+2\epsilon$ on $S_1$ and the maximum value of $P_g(\rho_{AB}^{(1-\epsilon)}(\tau), \{ P_{A,i}\}_i)$ on $O_1$ is larger or equal to $1- 2\epsilon$.
There exists a sufficiently small $\epsilon_1>0$ such that $1-2\epsilon_1= m_1+2\epsilon_1$.
For all $\epsilon<\epsilon_1$ the set of optimal $\oPP$-POVMs belongs to $O_1$.

We next consider a sequence of semi-open sets $O_i$ which all contain $\{ \kbra{i}_{A}\}_i$ and are such that $O_{i+1}\subset O_i$. There is a corresponding sequence of closed sets $S_i\equiv \Pi_A^{\oPP}-O_i$ and non-decreasing sequence of maximal values $m_i<1$ of $P_g(\rho_{AB}^{(1)}(\tau), \{ P_{A,i}\}_i)$ on $S_i$.
For each $m_i$ there is an $\epsilon_i$ such that for all $\epsilon<\epsilon_i$ the optimal $\oPP$-POVMs, namely the $\oPP$-POVMs that maximize $P_g(\rho_{AB}^{(1-\epsilon)}(\tau), \{P_{A,i}\}_i )$, belong to $O_i$. The sequence of $\epsilon_i$ is non-increasing since the sequence of $m_i$ is non-decreasing.

Let us consider a distance measure $d(\cdot,\cdot)$ on $\mathcal{B}(\mathcal{H}_{A})$ and define a sequence $O(\delta_i)$ of semi-open sets as the $\oPP$-POVMs $\{P_{A,i}\}_i$ such that $d( P_{A,i} , \kbra{i}_{A} )< \delta_i$ for any $i=1,...,\overline n$, for a strictly decreasing sequence $\delta_{i+1}<\delta_i$ where $\delta_i\to 0$ as $i\to \infty$. 

Then from the above argument we can conclude that, for any $\delta>0$ there exists a value $\lambda_\delta \in (0,1)$ such that, if $\lambda\in (\lambda_\delta ,1)$, any optimal {$\oPP$-POVM} $\{P_{A,i}^{(\lambda)} \}_i$ for this $\lambda$ is such that $d( P_{A,i}^{(\lambda)} , \kbra{i}_{A} )< \delta$ for any $i=1,...,\overline n$.

Next we show that $P_g(\rho_{AB}, \{ P_{A,i}\}_i  )$ is Lipschitz continuous as a function of $\{ P_{A,i}\}_i$. In other words, we construct a bound on the change of the guessing probability proportional to a distance measure quantifying the change of the POVM $\{ P_{A,i}\}_i$, valid for any $\rho_{AB}\in\mathcal{S}(\mathcal{H}_A\otimes\mathcal{H}_B)$.
We select a pair $\{ P_{A,i}^1\}_i$, $\{ P_{A,i}^2\}_i$ and observe that

\begin{eqnarray}\label{inin}
\max_{\{ P_{B,i} \}_i } \sum_i\mbox{Tr}[ P_{A,i}^1\otimes P_{B,i} \rho_{AB}]&=&\max_{ \{ P_{B,i} \}_i } \sum_i\mbox{Tr}[ P_{A,i}^2\otimes P_{B,i} \rho_{AB}+(P_{A,i}^1-P_{A,i}^2)\otimes P_{B,i}\rho_{AB}]\nonumber\\
&\leq &\max_{\{ P_{B,i} \}_i } \sum_i\mbox{Tr}[ P_{A,i}^2\otimes P_{B,i} \rho_{AB}]+\max_{ \{ P_{B,i} \}_i } \sum_i\mbox{Tr}[(P_{A,i}^1-P_{A,i}^2)\otimes P_{B,i}\rho_{AB}].
\end{eqnarray}

Let $\Delta_i$ be a diagonal matrix such that $\Delta_i= U_i(P_{A,i}^1-P_{A,i}^2)U^\dagger_i$ for a unitary $U_i$. Let $\Delta_{i+}$ and $\Delta_{i-}$ be the two diagonal positive semidefinite matrices such that $\Delta_i=\Delta_{i+}-\Delta_{i-}$. Note that $U^\dagger_i\Delta_{i+}U_i$ and $U^\dagger_i\Delta_{i-}U_i$ are positive semidefinite. This implies

\begin{eqnarray}
\max_{ \{ P_{B,i} \}_i } \sum_i \mbox{Tr}[(P_{A,i}^1-P_{A,i}^2)\otimes P_{B,i}\rho_{AB}]&=&
\max_{ \{ P_{B,i} \}_i }\sum_i \mbox{Tr}[ U^\dagger_i(\Delta_{i+}-\Delta_{i-})U_i\otimes P_{B,i}\rho_{AB}]\nonumber\\
&\leq &\max_{ \{ P_{B,i} \}_i }\sum_i \mbox{Tr}[ U^\dagger_i(\Delta_{i+})U_i\otimes P_{B,i}\rho_{AB}]+\max_{ \{ P_{B,i} \}_i }\sum_i \mbox{Tr}[ U^\dagger_i(\Delta_{i-})U_i\otimes P_{B,i}\rho_{AB}].
\end{eqnarray}
Since POVM elements are positive semidefinite $\mbox{Tr}[U^\dagger_i(\Delta_{i+})U_i\otimes P_{B,i}\rho_{AB}]$ is positive for each $P_{B,j}$.
Therefore $\mbox{Tr}[ U^\dagger_i(\Delta_{i+})U_i\otimes P_{B,i}\rho_{AB}]\leq\mbox{Tr}[U^\dagger_i(\Delta_{i+})U_i\otimes \sum_jP_{B,j}\rho_{AB}]=\mbox{Tr}[U^\dagger_i(\Delta_{i+})U_i\otimes \mathbbm{1}_B\rho_{AB}]$. Likewise $\mbox{Tr}[ U^\dagger_i(\Delta_{i-})U_i\otimes P_{B,i}\rho_{AB}]\leq \mbox{Tr}[U^\dagger_i(\Delta_{i-})U_i\otimes \mathbbm{1}_B\rho_{AB}]$. Using this we find that
\begin{eqnarray}\label{unin}
\max_{ \{ P_{B,i} \}_i } \sum_i \mbox{Tr}[(P_{A,i}^1-P_{A,i}^2)\otimes P_{B,i}\rho_{AB}]&\leq & \sum_i\mbox{Tr}[U^\dagger_i(\Delta_{i+}+\Delta_{i-})U_i\otimes \mathbbm{1}_B\rho_{AB}]\nonumber\\
&\leq &\sum_i\mbox{Tr}[U^\dagger_i(\Delta_{i+}+\Delta_{i-})U_i\otimes \mathbbm{1}_B]= (\overline{n}+1)d_S^2\sum_i\mbox{Tr}[\Delta_{i+}+\Delta_{i-}]=(\overline{n}+1)d_S^2\sum_i ||P_{A,i}^1-P_{A,i}^2||_1.\nonumber\\
\end{eqnarray}
where we used that $\mbox{Tr}[\mathbbm{1}_B]=(\overline{n}+1)d_S^2$ and for the second inequality we have used Von Neumann's trace inequality and that the largest eigenvalue of $\rho_{AB}$ is smaller or equal to 1. By combining Eq. (\ref{inin}) and Eq. (\ref{unin}) we can now conclude that
\begin{equation}
P_g(\rho_{AB}, \{ P_{A,i}^1\}_i  )-P_g(\rho_{AB}, \{ P_{A,i}^2\}_i  )\leq (\overline{n}+1)d_S^2\sum_i||P_{A,i}^1-P_{A,i}^2||_1.
\end{equation}
By exchanging the $\{ P_{A,i}^1\}_i$ and $\{ P_{A,i}^2\}_i$ in the above derivation we obtain 
\begin{equation}
P_g(\rho_{AB}, \{ P_{A,i}^2\}_i  )-P_g(\rho_{AB}, \{ P_{A,i}^1\}_i  )\leq (\overline{n}+1)d_S^2\sum_i||P_{A,i}^1-P_{A,i}^2||_1.
\end{equation}
Therefore
\begin{equation}\label{gnett}
|P_g(\rho_{AB}, \{ P_{A,i}^1\}_i  )-P_g(\rho_{AB}, \{ P_{A,i}^2\}_i  )|\leq (\overline{n}+1)d_S^2\sum_i||P_{A,i}^1-P_{A,i}^2||_1.
\end{equation}
Thus we have shown that $P_g(\rho_{AB}, \{ P_{A,i}\}_i  )$ is Lipschitz continuous as a function of $\{ P_{A,i}\}_i$ for any $\rho_{AB}\in\mathcal{S}(\mathcal{H}_A\otimes\mathcal{H}_B)$.

We now study the guessing probability of the ensemble that we obtain applying $\{P_{A,i}\}_i\in \Pi^\oPP_A$ on $\lrho(t)$ given by

\begin{equation}\label{genn}
P_g \left(\lrho(t) , \{P_{A,i} \}_i \right) \! = \! \lambda P_g \left( \{ \overline{p}_i , \rho_{A'',i}^{\perp }  \}_i \right) + (1-\lambda)  P_g \left( \{ \overline{p}_i , \rho_{SA',i}^{\parallel}(t)  \}_i \right) .
\end{equation}
We consider Eq. (\ref{genn}) when an optimal $\{ P_{A,i}^{(\lambda)} \}_i$ is chosen. We define the corresponding ensembles that appear in this expression  $\mathcal{E}^{\perp} (\{ P_{A,i}^{(\lambda)} \}_i)\equiv\{ \overline{p}_i , \rho_{A'',i}^{\perp }  \}_i$ and  $\mathcal{E}^{\parallel} (\{ P_{A,i}^{(\lambda)} \}_i)\equiv \{ \overline{p}_i , \rho_{SA',i}^{\parallel}(t)  \}_i$, so that 
\begin{equation}\label{optimm}
P_g \left(\lrho ( \tau ), \{ P_{A,i}^{(\lambda)} \}_i \right) = \lambda P_g (\mathcal{E}^{\perp} (\{ P_{A,i}^{(\lambda)} \}_i)) + (1-\lambda) P_g( \mathcal{E}^{\parallel} (\{ P_{A,i}^{(\lambda)} \}_i)) \, .
\end{equation}
The ensembles $\mathcal{E}^{\perp} (\{ P_{A,i}^{(\lambda)} \}_i)$ and $\mathcal{E}^{\parallel} (\{ P_{A,i}^{(\lambda)} \}_i)$ are functions on the set of optimal $\oPP$-POVMs $\{ P_{A,i}^{(\lambda)} \}_i$ for a given $\lambda$. Thus the image of the function $P_g (\mathcal{E}^{\perp} (\{ P_{A,i}^{(\lambda)} \}_i) )$ over the set of optimal $\oPP$-POVMs $\{ P_{A,i}^{(\lambda)} \}_i$ for a given $\lambda$, denoted $\mathrm{Im}(P_g^{(\lambda)} (\mathcal{E}^{\perp} ))\equiv\{P_g (\mathcal{E}^{\perp} (\{ P_{A,i}^{(\lambda)} \}_i) ):\{ P_{A,i}^{(\lambda)} \}_i \mathrm{\phantom{o} is\phantom{o}optimal}\}$, is a subset of the interval $[0,1]$, i.e., $\mathrm{Im}(P_g^{(\lambda)} (\mathcal{E}^{\perp} ))\subset[0,1]$. Likewise, the function $ P_g (\mathcal{E}^{\parallel}(\{ P_{A,i}^{(\lambda)} \}_i) )$ takes values in a set $\mathrm{Im}(P_g^{(\lambda)} (\mathcal{E}^{\parallel} ))\subset[0,1]$ for a given $\lambda$.

Using Eq. (\ref{gnett}) we can now construct bounds on $\mathrm{Im}(P_g^{(\lambda)} (\mathcal{E}^{\perp} ))$ and $\mathrm{Im}(P_g^{(\lambda)} (\mathcal{E}^{\parallel} ))$ for a given $\lambda$. First, based on the above argument we make the following observation: for any $\eta>0$ there exists a value $\lambda_\eta \in (0,1)$ such that, if $\lambda\in (\lambda_\eta ,1)$, any optimal {$\oPP$-POVM} $\{P_{A,i}^{(\lambda)} \}_i$ for this $\lambda$ is such that $|| P_{A,i}^{(\lambda)} -\kbra{i}_{A} ||_1< \eta$ for any $i=1,...,\overline n$. 
Thus, by Eq. (\ref{gnett}) the values in the image of $ P_g (\mathcal{E}^{\perp}(\{ P_{A,i}^{(\lambda)} \}_i) )$ for $\lambda\in (\lambda_\eta ,1)$ differ from $P_g (\mathcal{E}^{\perp}(\{ \kbra{i}_{A}\}_i) )=1$ by less than $\overline{n}(\overline{n}+1)d_S^2\eta$, i.e., $|P_g (\mathcal{E}^{\perp} (\{ P_{A,i}^{(\lambda)} \}_i) )-1|<\overline{n}(\overline{n}+1)d_S^2\eta$ for all optimal $\{ P_{A,i}^{(\lambda)} \}_i: \lambda \in (\lambda_\eta,1) $ . Likewise, the values in the range of $ P_g (\mathcal{E}^{\parallel} (\{ P_{A,i}^{(\lambda)} \}_i) )$ for $\lambda\in (\lambda_\eta ,1)$ differ from $ P_g (\mathcal{E}^{\parallel} (\{ \kbra{i}_{A}\}_i) )=P_g(\overline{\mathcal{E}}_{SA'}(\tau))$ by less than $\overline{n}(\overline{n}+1)d_S^2\eta$, i.e., $|P_g (\mathcal{E}^{\parallel}(\{ P_{A,i}^{(\lambda)} \}_i) )-P_g(\overline{\mathcal{E}}_{SA'}(\tau))|<\overline{n}(\overline{n}+1)d_S^2\eta$ for all optimal $\{ P_{A,i}^{(\lambda)} \}_i: \lambda \in (\lambda_\eta,1) $.
Using this we can state the following

\begin{equation}
\forall \delta >0,\, \exists \lambda_\delta >0 :  \, P_g (\mathcal{E}^{\parallel} (\{ P_{A,i}^{(\lambda)} \}_i)) - P_g(\overline{\mathcal{E}}_{SA'}(\tau)) < \delta \,\, , \, \forall \{ P_{A,i}^{(\lambda)} \}_i: \lambda \in (\lambda_\delta,1) \, .
\end{equation}

\section{Lipschitz continuity of $C^\PP_A$ on the set of states}\label{conti}

Consider a POVM $\{P_{A,i}\}_{i}$ and two states ${\rho}_{AB}$ and $\tilde{\rho}_{AB}$. Let $p_i=\mbox{Tr}[P_{A,i}{\rho}_{AB}]$ and $\tilde p_i=\mbox{Tr}[P_{A,i}\tilde{\rho}_{AB}]$.  
Let $\Delta$ be a diagonal matrix such that $\Delta= U(\tilde{\rho}_{AB}-\rho_{AB})U^\dagger$ for a unitary $U$. Let $\Delta_+$ and $\Delta_-$ be the two diagonal positive semidefinite matrices such that $\Delta=\Delta_+-\Delta_-$. Note that $U^\dagger\Delta_+U$ and $U^\dagger\Delta_-U$ are positive semidefinite. Then

\begin{eqnarray}\label{contin}
\tilde p_i-p_i=\mbox{Tr}[P_{A,i}(\tilde{\rho}_{AB}-{\rho}_{AB})]=\mbox{Tr}[P_{A,i}(U^\dagger\Delta_+U-U^\dagger\Delta_-U)]
\leq \mbox{Tr}[P_{A,i}U^\dagger\Delta_+U]+\mbox{Tr}[P_{A,i}U^\dagger\Delta_-U]
\end{eqnarray}
Since POVM elements are positive semidefinite $\mbox{Tr}[P_{A,j} U^\dagger\Delta_+U]$ is positive for each $P_{A,j}$.
Therefore $\mbox{Tr}[P_{A,i}U^\dagger\Delta_+U]\leq\mbox{Tr}[\sum_j P_{A,j} U^\dagger\Delta_+U]=\mbox{Tr}[U^\dagger\Delta_+U]=\mbox{Tr}[\Delta_+]$. Likewise $\mbox{Tr}[P_{A,i}(U^\dagger\Delta_-U)]\leq \mbox{Tr}[\Delta_-]$. Thus,
\begin{eqnarray}\label{contin}
\mbox{Tr}[P_{A,i}U^\dagger\Delta_+U]+\mbox{Tr}[P_{A,i}U^\dagger\Delta_-U]\leq \mbox{Tr}[\Delta_++\Delta_-]=||\tilde{\rho}_{AB}-{\rho}_{AB}||_1.
\end{eqnarray}
It follows that 
\begin{eqnarray}
\tilde p_i-p_i\leq||\tilde{\rho}_{AB}-{\rho}_{AB}||_1
\end{eqnarray}
By exchanging $p_i$ and $\tilde p_i$ in the above derivation we obtain
\begin{eqnarray}
p_i-\tilde p_i\leq||\tilde{\rho}_{AB}-{\rho}_{AB}||_1.
\end{eqnarray}
From this we can conclude that
\begin{eqnarray}\label{ecco}
| \tilde p_i- p_i|\leq||\tilde{\rho}_{AB}-{\rho}_{AB}||_1.
\end{eqnarray}

Assume now that $\{P_{A,i}\}_{i}$ is a $\PP$-POVM for ${\rho}_{AB}$ but not necessarily for $\tilde{\rho}_{AB}$.
We can create a $\PP$-POVM for $\tilde{\rho}_{AB}$ from $\{P_{A,i}\}_{i}$ in the following way. 
If $\tilde p_i- p_i>0$ we subtract $(1- p_i/\tilde p_i)P_{A,i}$ from $P_{A,i}$ to create a new element $\tilde P_{A,i}\equiv p_i/\tilde p_i P_{A,i}$. 
Let $P_{r}\equiv \sum_{i\in\{i+\}} (1- p_i/\tilde p_i)P_{A,i}$ where the $\{i+\}$ is the set of all $i$ such that $\tilde p_i- p_i>0$ and let $p_r\equiv\mbox{Tr}[P_{r}\tilde{\rho}_{AB}]=\sum_{i\in\{i+\}}\tilde p_i- p_i$.
If $\tilde p_i- p_i<0$ we add $( p_i-\tilde p_i)/(p_r)P_{r}$ to $P_{A,i}$ to create a new element $\tilde P_{A,i}\equiv P_{A,i}+( p_i-\tilde p_i)/(p_r)P_{r}$.

Next consider the trace distance between $\{ \tilde P_{A,i}\}_{i}$ and $\{P_{A,i}\}_{i}$.

\begin{eqnarray}
\sum_i||\tilde P_{A,i}-P_{A,i}||_1=\sum_{i\in\{i+\}}\left| \frac{\tilde p_i- p_i}{\tilde p_i}\right|||P_{A,i}||_1+\sum_{i\notin \{i+\} }\left|\frac{ p_i-\tilde p_i}{p_r }\right|||P_{r}||_1=\sum_{i\in\{i+\}}\left| \frac{\tilde p_i- p_i}{\tilde p_i}\right|||P_{A,i}||_1+||P_{r}||_1,
\end{eqnarray}
where we used that  $\sum_{i\notin \{i+\} }p_i-\tilde p_i=p_r$.
Since each $P_{A,i}$ is positive semidefinite with all eigenvalues less or equal to 1 it follows that 
$||P_{A,i}||_1\leq n_A$ where $n_A\equiv\textrm{dim}(\mathcal{H}_A)$. Moreover, $||P_{r}||_1=||\sum_{i\in\{i+\}} (1- p_i/\tilde p_i)P_{A,i}||_1\leq \sum_{i\in\{i+\}} |1- p_i/\tilde p_i|||P_{A,i}||_1$. Therefore,
\begin{eqnarray}
\sum_i||\tilde P_{A,i}-P_{A,i}||_1\leq 2\sum_{i\in\{i+\}}\left| \frac{\tilde p_i- p_i}{\tilde p_i}\right|||P_{A,i}||_1\leq 2n_A\sum_{i\in\{i+\}}\left| \frac{\tilde p_i- p_i}{\tilde p_i}\right|.
\end{eqnarray}
We further note that $\tilde p_i>p_i$ for $i\in\{i+\}$ and thus if $p_{min}\equiv \min_i p_i$ we have that $\tilde p_i>p_{min}$    for $i\in\{i+\}$. It follows that $|(\tilde p_i- p_i)/\tilde p_i|< |(\tilde p_i- p_i)/p_{min}|$ for $i\in\{i+\}$. Hence,

\begin{eqnarray}
\sum_i||\tilde P_{A,i}-P_{A,i}||_1 < \frac{2n_A}{p_{min}}\sum_{i\in\{i+\}}\left|\tilde p_i- p_i\right|\leq \frac{2n_A}{p_{min}}\sum_{i\in\{i+\}}||\tilde{\rho}_{AB}-{\rho}_{AB}||_1 < \frac{2n_A | \PP|}{p_{min}}||\tilde{\rho}_{AB}-{\rho}_{AB}||_1,
\end{eqnarray}
where $| \PP|$ is the number of elements of $\PP$ and we have used Eq. (\ref{ecco}).
Thus if $\{P_{A,i}\}_{i}$ is a $\PP$-POVM for ${\rho}_{AB}$ the minimum trace distance between $\{P_{A,i}\}_{i}$ and a $\PP$-POVM for $\tilde{\rho}_{AB}$ is upper bounded by $2n_A| \PP|||\tilde{\rho}_{AB}-{\rho}_{AB}||_1/p_{min} $. By an analogous argument if $\{\tilde P_{A,i}\}_{i}$ is a $\PP$-POVM for $\tilde{\rho}_{AB}$ the minimum trace distance between $\{\tilde P_{A,i}\}_{i}$ and a $\PP$-POVM for ${\rho}_{AB}$ is upper bounded by $2n_A| \PP|||\tilde{\rho}_{AB}-{\rho}_{AB}||_1/p_{min} $

We now recall Eq. (\ref{lip1}) and Eq. (\ref{gnett}) from Appendix \ref{appD} showing that the guessing probability $P_g(\rho_{AB}, \{ P_{A,i}\}_i  )$ is Lipschitz continuous  on the set of states for a fixed $ \{ P_{A,i}\}_i$ 

\begin{equation}\label{epple}
|P_g(\tilde\rho_{AB}, \{ P_{A,i}\}_i  )-P_g(\rho_{AB} \{ P_{A,i}\}_i  )|\leq ||\tilde{\rho}_{AB}-{\rho}_{AB}||_1,
\end{equation}
 and Lipschitz continuous on the set of POVMs for a fixed $\rho_{AB}$
\begin{equation}\label{gnettel}
|P_g(\rho_{AB}, \{ \tilde P_{A,i}\}_i  )-P_g(\rho_{AB}, \{ P_{A,i}\}_i  )|\leq n_B\sum_i||\tilde P_{A,i}-P_{A,i}||_1,
\end{equation}
where $n_B\equiv \textrm{dim}(\mathcal{H}_B)$.

We are now ready to show Lipschitz continuity of $C^\PP_A$ on the set of states. When $\rho_{AB}$ changes to $\tilde{\rho}_{AB}$ the minimum trace distance between any $\PP$-POVM for $\tilde{\rho}_{AB}$ and a $\PP$-POVM for ${\rho}_{AB}$ is upper bounded by $2n_A| \PP|||\tilde{\rho}_{AB}-{\rho}_{AB}||_1/p_{min} $. From this and Eq. (\ref{gnettel}) follows that the difference between the maximum of $P_g(\rho_{AB}, \{ P_{A,i}\}_i  )$ evaluated on the set $\Pi^\PP_A(\tilde\rho_{AB})$ of $\PP$-POVMs for $\tilde{\rho}_{AB}$ and the maximum of $P_g(\rho_{AB}, \{ P_{A,i}\}_i  )$ evaluated on the set $\Pi^\PP_A(\rho_{AB})$ of $\PP$-POVMs for ${\rho}_{AB}$
is upper bounded by $2n_An_B| \PP|||\tilde{\rho}_{AB}-{\rho}_{AB}||_1/p_{min}$. Moreover, by Eq. (\ref{epple}) the difference between $P_g(\rho_{AB}, \{ P_{A,i}\}_i  )$ and $P_g(\tilde{\rho}_{AB}, \{ P_{A,i}\}_i  )$ for any given $\{ P_{A,i}\}_i$ in the union $\Pi^\PP_A(\rho_{AB})\cup\Pi^\PP_A(\tilde\rho_{AB})$ of the set of $\PP$-POVMs for $\tilde{\rho}_{AB}$ and the set of $\PP$-POVMs for ${\rho}_{AB}$ is upper bounded by $||\tilde{\rho}_{AB}-{\rho}_{AB}||_1$. In conclusion the change of $C^\PP_A$ when $\rho_{AB}$ changes to $\tilde{\rho}_{AB}$ is upper bounded by $(1+2n_An_B| \PP|/p_{min})||\tilde{\rho}_{AB}-{\rho}_{AB}||_1$ , i.e.,

\begin{equation}\label{tettor}
|C^\PP_A(\tilde\rho_{AB} )-C^\PP_A(\rho_{AB})|< \left(1+\frac{2n_An_B| \PP|}{p_{min}}\right)||\tilde{\rho}_{AB}-{\rho}_{AB}||_1,
\end{equation}
Thus $C^\PP_A$ is Lipschitz continuous on the set of states. 

Using Eq. (\ref{tettor}) we can make some observations about the robustness of correlation backflows.
If we have a backflow in the interval $[\tau,\tau+\Delta\tau]$ for an initial state $\rho_{AB}(t_0)$, i.e., $C^\PP_A(\rho_{AB}(\tau+\Delta\tau) )-C^\PP_A(\rho_{AB}(\tau))>0$, any state $\rho'_{AB}$ such that $||{\rho'}_{AB}-{\rho}_{AB}(\tau+\Delta\tau)||_1< {p_{min}}/({p_{min}+2n_An_B| \PP|})|C^\PP_A(\rho_{AB} (\tau+\Delta\tau))-C^\PP_A(\rho_{AB}(\tau))|$ satisfies $C^\PP_A(\rho'_{AB})-C^\PP_A(\rho_{AB}(\tau))>0$. Likewise, if $C^\PP_A(\rho_{AB}(\tau+\Delta\tau) )-C^\PP_A(\rho_{AB}(\tau))>0$ any state $\rho''_{AB}$ such that $||{\rho''}_{AB}-{\rho}_{AB}(\tau)||_1< {p_{min}}/({p_{min}+2n_An_B| \PP|})|C^\PP_A(\rho_{AB} (\tau+\Delta\tau))-C^\PP_A(\rho_{AB}(\tau))|$ satisfies $C^\PP_A(\rho_{AB}(\tau+\Delta\tau))-C^\PP_A(\rho''_{AB})>0$. Moreover, if $C^\PP_A(\rho_{AB}(\tau+\Delta\tau) )-C^\PP_A(\rho_{AB}(\tau))>0$ any pair of states $\rho'_{AB}$ and $\rho''_{AB}$ such that $||{\rho'}_{AB}-{\rho}_{AB}(\tau+\Delta\tau)||_1+||{\rho''}_{AB}-{\rho}_{AB}(\tau)||_1< {p_{min}}/({p_{min}+2n_An_B| \PP|})|C^\PP_A(\rho_{AB} (\tau+\Delta\tau))-C^\PP_A(\rho_{AB}(\tau))|$ satisfies $C^\PP_A(\rho'_{AB})-C^\PP_A(\rho''_{AB})>0$.

Thus a backflow can be seen also for evolution of a perturbed initial state $\rho_{AB}(t_0)+\chi$ where $\chi$ is traceless Hermitian if $||\Lambda(\tau+\Delta\tau,t_0)\otimes \mathbbm{1}_B(\chi)||_1+||\Lambda(\tau,t_0)\otimes \mathbbm{1}_B(\chi)||_1< {p_{min}}/({p_{min}+2n_An_B| \PP|})|C^\PP_A(\rho_{AB} (\tau+\Delta\tau))-C^\PP_A(\rho_{AB}(\tau))|$. Since
$\Lambda(t,t_0)$ is CPTP for every $t$ it holds that $||\Lambda(\tau+\Delta\tau,t_0)\otimes \mathbbm{1}_B(\chi)||\leq ||\chi||$ and  $||\Lambda(\tau,t_0)\otimes \mathbbm{1}_B(\chi)||\leq ||\chi||$. Thus there is a neighbourhood of $\rho_{AB}(t_0)$  such that all states in this neighbourhood show a backflow in the interval $[\tau,\tau+\Delta\tau]$ and it includes all states $\rho_{AB}(t_0)+\chi$ such that $2||\chi||_1< {p_{min}}/({p_{min}+2n_An_B| \PP|})|C^\PP_A(\rho_{AB} (\tau+\Delta\tau))-C^\PP_A(\rho_{AB}(\tau))|$. Hence, this neighbourhood has the same dimension as $\mathcal{S}(\mathcal{H}_A \otimes \mathcal{H}_B)$.

\section{Comments on the Non-Markovianity measure: the case of non-differentiable $C^\PP_A(\rho_{ASA'}(t))$}

Here we discuss the non-Markovianity measure introduced in Eq. (\ref{measure}) and how it can be extended to work for almost everywhere differentiable $C^\PP_A(\rho_{ASA'}(t))$. We also comment on how one may construct measures of non-Markovianity based on $C^\PP_A(\rho_{ASA'}(t))$ using finite differences.

First we consider the case where ${C}_A^{\PP}(\rho_{ASA'}(t))$ is differentiable.
Consider the non-Markovianity measure introduced in Eq. (\ref{measure}) and let $[t_1,t_2]$ be a closed time interval for which it holds that $\frac{d}{dt}{C}_A^{\PP}(\rho_{ASA'}(t))>0$. In  Eq. (\ref{measure}) the type of integration used is not specified, but if the Henstock-Kurzweil integral is used it holds that

\begin{equation}\label{kurzweil}
 \int_{t_1}^{t_2} \frac{d}{dt}{C}_A^{\PP}(\rho_{ASA'}(t)) dt={C}_A^{\PP}(\rho_{ASA'}(t_2))-{C}_A^{\PP}(\rho_{ASA'}(t_1))\, ,
\end{equation}
if ${C}_A^{\PP}(\rho_{ASA'}(t))$ is differentiable in $[t_1,t_2]$. If the Riemann or Lebesgue integral is used there would be the additional requirement that $\frac{d}{dt}{C}_A^{\PP}(\rho_{ASA'}(t))$ is Riemann or Lebesgue integrable, respectively.

Next we consider the case where ${C}_A^{\PP}(\rho_{ASA'}(t))$ is almost everywhere differentiable, i.e. ${C}_A^{\PP}(\rho_{ASA'}(t))$ is non-differentiable for at most a countable set of times $t_i$.
At the times where ${C}_A^{\PP}(\rho_{ASA'}(t))$ fails to be differentiable, it is either non-differentiable but continuous or has a discontinuity. Since ${C}_A^{\PP}(\rho_{ASA'}(t))$ is a continuous function on the set of states it has a discontinuity only if the evolution of $\rho_{ASA'}(t)$ is discontinuous.
To deal with these points of non-differentiability we can define a function $\frac{d}{dt}{C}_A^{\PP}(\rho_{ASA'}(t))^*$ that is equal to $\frac{d}{dt}{C}_A^{\PP}(\rho_{ASA'}(t))$ for all $t$ for which $C^\PP_A(\rho_{ASA'}(t))$ is differentiable, and is equal to zero otherwise. If we use the Henstock-Kurzweil integral in the definition of the measure $N^\PP(\{\Lambda_S(t,t_0)\}_t)$ it is insensitive to how we define $\frac{d}{dt}{C}_A^{\PP}(\rho_{ASA'}(t))^*$ in the countable set of $t_i$ where $C^\PP_A(\rho_{ASA'}(t))$ is not differentiable. Thus we can define the measure

\begin{equation}\label{generalization}
N^\PP(\{\Lambda_S(t,t_0)\}_t) \equiv \sup_{\rho_{ASA'}(t_0)} \int_{\frac{d}{dt}{C}_A^{\PP}(\rho_{ASA'}(t))^*>0} \frac{d}{dt}{C}_A^{\PP}(\rho_{ASA'}(t))^* dt +\sum_{t_i}\Delta_{+}(t_i)\, ,
\end{equation}
where $\Delta_{+}(t_i)$ is the value of a discontinuous increase of $C^\PP_A(\rho_{ASA'}(t))$ at a time $t_i$. This definition reduces to that of Eq. (\ref{measure}) when ${C}_A^{\PP}(\rho_{ASA'}(t))$ is differentiable.

For the case when ${C}_A^{\PP}(\rho_{ASA'}(t))$ is not almost everywhere differentiable the measure in Eq. (\ref{generalization}) is not well defined. In this case one can resort to finite difference methods to estimate the amount of non-Markovianity in a given interval. A simple measure of this kind is

\begin{equation}
N^\PP_{finite}(\{\Lambda_S(t,t_0)\}_t) \equiv \sup_{\rho_{ASA'}(t_0), t_i<t_f} \{ 0 , C^\PP_A(\rho_{ASA'}(t_f)) - C^\PP_A(\rho_{ASA'}(t_i)) \},
\end{equation}
where $t_i$ and $t_f$ and belong to the interval of interest. We know that if the evolution is non-Markovian there always exists  at least one $\mathcal{P}$, some ancillas $A$ and $A'$, an initial state $\rho_{ASA'}(t_0)$ and a pair of times $t_{i}$ and $t_f$ such that $C^\PP_A(\rho_{ASA'}(t_f)) - C^\PP_A(\rho_{ASA'}(t_i))>0$ (See Theorem \ref{theorem}). Therefore, $N^\PP_{finite} (\{\Lambda_S(t,t_0)\}_t)>0$ if and only if the evolution $\{\Lambda_S(t,t_0)\}_t$ is non-Markovian.

\end{widetext}
\end{appendix}


\begin{thebibliography}{xx}
\bibitem{BPbook} H.-P. Breuer and F. Petruccione,
{\it The Theory of Open Quantum Systems} (Oxford Univ. Press,
Oxford, 2007).

\bibitem{RHbook} \'A. Rivas and S. F. Huelga, {\it Open Quantum Systems. An Introduction}, (Springer, Heidelberg, 2011).

\bibitem{WWTA} C. Addis, B. Bylicka, D. Chru\'sci\'nski, and S. Maniscalco, Phys. Rev. A, {\bf 90}, 052103 (2014).

\bibitem{INI} D. Chru\'sci\'nski, \'A. Rivas, and E. St\o rmer,	Phys. Rev. Lett. {\bf 121}, 080407 (2018).


\bibitem{RHP} \'A. Rivas, S. F. Huelga, and M. B. Plenio, Phys. Rev. Lett. {\bf 105}, 050403 (2010).

\bibitem{BLP} H.-P. Breuer, E.-M. Laine and J. Piilo, Phys. Rev. Lett.
\textbf{103}, 210401 (2009).

\bibitem{BD} F. Buscemi and N. Datta,
Phys. Rev. A {\bf 93}, 012101 (2016).

\bibitem{bogna} B. Bylicka, M. Johansson, and A. Ac{\'\i}n,
Phys. Rev. Lett. {\bf 118}, 120501 (2017).

\bibitem{LFS}S. Luo, S. Fu, and H. Song,
Phys. Rev. A {\bf 86}, 044101 (2012).

\bibitem{BognaREV} C. Addis, B. Bylicka, D. Chru\'sci\'nski, and S. Maniscalco, Phys. Rev. A, \textbf{90}, 052103 (2014).


\bibitem{long} D. De Santis, M. Johansson, B. Bylicka, N. K. Bernardes and A. Ac\'\i n, 	arXiv:1903.12218.

\bibitem{Janek} J. Ko{\l}ody{\'n}ski, S. Rana, and A. Streltsov, arXiv:1903.08663 (2019).

\bibitem{PRA} D. De Santis, M. Johansson, B. Bylicka, N. K. Bernardes and A. Ac\'\i n, Phys. Rev. A \textbf{99}, 012303 (2019).

\bibitem{BognaChannel0} B. Bylicka, D. Chru\'sci\'nski, and S. Maniscalco, Sci. Rep. {\bf 4}, 5720 (2014)

\bibitem{Volume} S. Lorenzo, F. Plastina, and M. Paternostro, Phys. Rev. A {\bf{88}}, 020102 (R) (2013).










\bibitem{qubitchrusc} S. Chakraborty, and D. Chru\'sci\'nski, arXiv:1901.03476 (2019).

\bibitem{imagenon} D. Chru\'sci\'nski, A. Rivas, and E. St{\o}rmer, Phys. Rev.
Lett. {\bf 121}, 080407 (2018).








\end{thebibliography}
\end{document}